
\documentclass[a4paper,USenglish]{lipics-v2021}
\hideLIPIcs
\nolinenumbers

\usepackage{todonotes}

\usepackage{xspace}
\def\epsilon{\varepsilon}
\newcommand{\Beps}[0]{B^\epsilon}
\newcommand{\logB}{\log_{B}}
\newcommand{\logMB}{\log_{M/B}}
\newcommand{\IO}[0]{I/O\xspace}
\newcommand{\IOs}[0]{I/Os\xspace}

\newcommand{\BepsTree}[0]{$\Beps$\nobreakdash-tree\xspace}
\newcommand{\BepsTrees}[0]{$\Beps$\nobreakdash-trees\xspace}

\newcommand{\vbar}[0]{\overline{v}}
\newcommand{\Nbar}[0]{\overline{N}}
\newcommand{\versionSet}[1]{\mathcal{S}_{#1}}
\newcommand{\Nv}[0]{N_v}
\newcommand{\versionCurrent}{v_c}

\newcommand{\Operation}[1]{{\normalfont\textsc{#1}}\xspace}
\newcommand{\Insert}{\Operation{Insert}}
\newcommand{\Delete}{\Operation{Delete}}
\newcommand{\Range}{\Operation{Range}}
\newcommand{\Search}{\Operation{Search}}

\newcommand{\lrCurlyBrackets}[1]{\left\{ #1 \right\}}
\newcommand{\lrParents}[1]{\left( #1 \right)}

\newcommand{\lrCeil}[1]{\left\lceil #1 \right\rceil}

\newcommand{\Oh}[1]{\mathcal{O}\!\lrParents{#1}}
\newcommand{\OhMega}[1]{\Omega\!\lrParents{#1}}
\newcommand{\OhTheta}[1]{\Theta\!\lrParents{#1}}
\newcommand{\sort}[1]{\mathit{sort}\!\lrParents{#1}}

\usepackage{xcolor}
\usepackage{tikz}
\usetikzlibrary{calc,math,decorations.pathreplacing,positioning}

\newcommand{\CLOSEDRECTANGLE}[4]{
  \begin{scope}[shift={(#1)}]
    \draw[very thick] (0,0) rectangle (#2,#3);
    \ifstrempty{#4} {} {
        \node (corner) at (#2,0) {};
        \node[inner sep=1, align=left, above left=-.5pt and -.5pt of corner, fill=white] {#4};
    }
  \end{scope}}

\newcommand{\OPENRECTANGLE}[4]{
  \begin{scope}[shift={(#1)}]
    \draw[white, fill=gray!10] (0,0) rectangle (#2,#3);
    \draw[very thick] (0,#3) -- (0,0) -- (#2,0) -- (#2,#3);
    \ifstrempty{#4} {} {
        \node (corner) at (#2,0) {};
        \node[inner sep=1, align=left, above left=-.5pt and -.5pt of corner, fill=none] {#4};
    }
  \end{scope}}

\newcommand{\OPENSEGMENTS}[3][black]{%
  \foreach \x/\ystart in {#3} {
    \draw[#1] (\x,\ystart) -- (\x,#2);
    \draw[#1, fill=#1] (\x,\ystart) circle (0.075);
  }}

\newcommand{\BUFFEREDOPENSEGMENTS}[0]{\OPENSEGMENTS[lightgray]}

\newcommand{\CLOSEDSEGMENTS}[2][black]{
  \foreach \x/\ystart/\yend in {#2} {
    \draw[#1] (\x,\ystart) -- (\x,\yend-0.075);
    \draw[#1, fill=#1] (\x,\ystart) circle (0.075);
    \draw[#1, fill=none] (\x,\yend) circle (0.075);
  }}

\newcommand{\BUFFEREDCLOSEDSEGMENTS}[0]{\CLOSEDSEGMENTS[lightgray]}

\newcommand{\FADESEGMENTS}[1]{
  \foreach \x/\ystart/\yend in {#1} {
    \coordinate (A) at (\x,\ystart);
    \tikzmath{\z = \ystart + ((\yend - \ystart) / 4 * 3);}
    \coordinate (B) at (\x,\z);
    \foreach \i in {0, 1, ..., 9} {
      \tikzmath{int \j; \j = \i + 1; \k = (\i + .5) / 10 * 100;}
      \draw[color=black!\k!lightgray] ($(B)!\i/10!(A)$) -- ($(B)!\j/10!(A)$);
    }
    \draw[lightgray] (B) -- (\x,\yend-0.075);
    \draw[black, fill=black] (A) circle (0.075);
    \draw[lightgray, fill=none] (\x,\yend) circle (0.075);
  }}


\title{Buffered Partially-Persistent External-Memory Search Trees}

\author{Gerth St{\o}lting Brodal}{Aarhus University, Denmark}{gerth@cs.au.dk}{https://orcid.org/0000-0001-9054-915X}{}

\author{Casper Moldrup Rysgaard}{Aarhus University, Denmark}{rysgaard@cs.au.dk}{https://orcid.org/0000-0002-3989-123X}{}

\author{Rolf Svenning}{Aarhus University, Denmark}{rolfsvenning@cs.au.dk}{https://orcid.org/0000-0002-9903-4651}{}

\funding{All authors are funded by Independent Research Fund Denmark, grant~9131-00113B.}

\authorrunning{G.\,S. Brodal, C.\,M. Rysgaard and R. Svenning}

\Copyright{Gerth St{\o}lting Brodal, Casper Moldrup Rysgaard, Rolf Svenning}

\keywords{B-tree, buffered updates, partial persistence, external memory}

\ccsdesc{Theory of computation~Data structures design and analysis}

\EventEditors{John Q. Open and Joan R. Access}
\EventNoEds{2}
\EventLongTitle{42nd Conference on Very Important Topics (CVIT 2016)}
\EventShortTitle{CVIT 2016}
\EventAcronym{CVIT}
\EventYear{2016}
\EventDate{December 24--27, 2016}
\EventLocation{Little Whinging, United Kingdom}
\EventLogo{}
\SeriesVolume{42}
\ArticleNo{23}

\begin{document}

\maketitle

\begin{abstract}
    We present an optimal partially-persistent external-memory search tree with amortized I/O bounds matching those achieved by the non-persistent $B^{\varepsilon}$-tree by Brodal and Fagerberg [SODA 2003].
    In a partially-persistent data structure each update creates a new version of the data structure, where all past versions can be queried, but only the current version can be updated. 
    All operations should be efficient with respect to the size $N_v$ of the accessed version $v$. 
    For any parameter~$0 < \varepsilon < 1$, our data structure supports insertions and deletions in amortized $\mathcal{O}\!\left(\frac{1}{\varepsilon B^{1 - \varepsilon}} \log_B N_v\right)$ I/Os, where $B$ is the external-memory block size.
    It also supports successor and range reporting queries in amortized $\mathcal{O}\!\left(\frac{1}{\varepsilon} \log_B N_v + K / B\right)$ I/Os, where $K$ is the number of values reported. 
    The space usage of the data structure is linear in the total number of updates. 
    We make the standard and minimal assumption that the internal memory has size $M \geq 2B$. 
    The previous state-of-the-art external-memory partially-persistent search tree by Arge, Danner and Teh [JEA 2003] supports all operations in worst-case $\mathcal{O}\!\left(\log_B N_v + K / B\right)$ I/Os, matching the bounds achieved by the classical B-tree by Bayer and McCreight [Acta Informatica~1972]. 
    Our data structure successfully combines buffering updates with partial persistence. 
    The I/O bounds can also be achieved in the worst-case sense, by slightly modifying our data structure and under the requirement that the memory size $M = \Omega\!\left(B^{1-\varepsilon}\log_2 (\max_v N_v)\right)$.
    For updates, where the I/O bound is~$o(1)$, we assume that the I/Os are performed evenly spread out among the updates (by performing buffer-overflows incrementally).
    The worst-case result slightly improves the memory requirement over the previous ephemeral external-memory dictionary by
    Das, Iacono, and Nekrich (ISAAC 2022), who achieved matching worst-case I/O bounds but required $M = \Omega\!\left(B\log_B N\right)$.
\end{abstract}

\section{Introduction}
\label{sec:introduction}

Developing data structures for storing a set of values from a totally ordered set subject to insertions, deletions, successor and predecessor queries, and range reporting queries is a fundamental problem in computer science. 
The classical solution in external-memory is the B-tree by Bayer and McCreight~\cite{BayerMcCreight72} which supports all the operations in worst-case $\Oh{\logB N + N/K}$ \IOs, where $N$ is the current size of the set, $K$ is the number of reported values, and $B$ is the external-memory block size. 
While the B-tree achieves the optimal number of \IOs for queries, for any $0 < \epsilon < 1$, the \BepsTree by Brodal and Fagerberg~\cite{BrodalFagerberg03lower} significantly improves update efficiency by attaching buffers to the internal nodes of a B-tree. 
This design supports updates with amortized $\Oh{\frac{1}{\epsilon B^{1 - \epsilon}} \logB{N}}$ \IOs. 
The $\frac{1}{\epsilon}B^{1 - \epsilon}$ factor improvement over traditional B-trees is significant when considering typical parameters of, e.g., $\epsilon = 1/2$ and $B = 1000$~\cite{Arge07} and the \BepsTree has found important applications in high-performance industry software such as TokuDB~\cite{BenderFJJKOYZ15} and BetrFS~\cite{JannenEtal15}. 

Although the \BepsTree optimizes update efficiency, it is \emph{ephemeral}, like most dynamic data structures, meaning that each update overwrites the previous version, and only the current version can be queried. 
In many applications, maintaining access to past versions is beneficial or even essential. 
A \emph{persistent} data structure supports such accesses, and in their seminal 1989 paper, Driscoll, Sarnak, Sleator, and Tarjan introduced general techniques for making ephemeral data structures persistent~\cite{DriscollSarnakTarjan89}. 
In particular, a \emph{partially-persistent} data structure supports queries in all past versions of the data structure but only the current version can be updated. 
Multiple authors have adapted these techniques to the external-memory model, developing partially-persistent B-trees that support updates and queries in worst-case $\Oh{\logB N_v + K / B}$ \IOs, where $N_v$ is size of the accessed version $v$, matching the performance of classical B-trees~\cite{ArgeDannerTeh03, BeckerGOSW96, VarmanVerma07}. 

In this paper, we present the first buffered partially-persistent external-memory search tree that retains the optimal update and query performance of the (ephemeral) \BepsTree. 
Our approach combines buffering techniques, which are essential for efficient updates in external memory, with a geometric view of persistence. 

\subsection{The External-Memory Model}

For problems on massive amounts of data that do not fit in internal memory, the standard model of computation is the \IO-model by Aggarwal and Vitter~\cite{AggarwalVitter88}. 
In this model, all computation occurs in an internal memory of size $M$, while an infinite external memory is used for storage. 
Data is transferred between internal and external memory in blocks of $B$ consecutive elements, with each transfer counting as an \IO. 
The \IO complexity of an algorithm is defined as the total number of \IOs it performs, and the space usage is the maximum number of external-memory blocks used at any given time. 
The only operation we allow on stored values are comparisons and we follow the standard assumption that the parameters $B \ge 2$ and $M \geq 2B$.
Aggarwal and Vitter proved that the optimal bound for sorting in external memory is $\sort{N} = \OhTheta{\frac{N}{B} \logMB \frac{N}{B}}$ I/Os~\cite{AggarwalVitter88}.
An algorithm is called \emph{cache oblivious} if it is designed without explicit knowledge of $B$ and $M$ but is still analyzed in the \IO model for arbitrary values of these parameters, assuming an optimal offline cache replacement strategy~\cite{FigoLeisersonProkopRamachandran12}. 
Some authors make stronger assumptions on the size of the internal memory, such as the \emph{tall-cache assumption} $M \geq B^{1 + \delta}$, for some constant~$\delta > 0$. 
For cache-oblivious algorithms, a tall-cache assumption is necessary to achieve optimal comparison-based external-memory sorting~\cite{BrodalFagerberg03}.

Being considerate of the \IO-behavior of algorithms can be crucial in practice, as demonstrated by Streaming B-trees~\cite{BenderFFFKN07}, the generation of massive graphs for the LFR benchmark~\cite{HamannMPTW18, LancichinettiFortunato09, LancichinettiFortunatoRadicchi08}, and the FlashAttention algorithm used in Transformer models~\cite{DaoFuErmonRudraRe22}.

\subsection{Interface of a Partially-Persistent Search Tree}
\label{sec:interface}

A partially-persistent search tree stores an ordered set of values supporting the below interface (in our examples we use integers, but our data structure works for any totally ordered set).
Each version is identified by a unique integer version identifier~$v$, with zero being the initial version and the \emph{current} version denoted by~$\versionCurrent$.
Further, we let $\versionSet{v}$ denote the set of values contained in version~$v$, and $\Nv$ the size of $\versionSet{v}$.
Initially $\versionCurrent = 0$ and $\versionSet{\versionCurrent} = \emptyset$.
Updates (insertions and deletions) can only be performed on the current set $\versionSet{\versionCurrent}$, and any update advances the current version identifier, i.e., each version of the set $\versionSet{v}$ only differs from the previous version $\versionSet{v-1}$ by at most a single value.
Queries can be performed on any version.

\begin{description}
    \item[$\Insert(x)$]
      Creates $\versionSet{\versionCurrent + 1} = \versionSet{\versionCurrent} \cup \{ x \}$, increments $\versionCurrent$, and returns $\versionCurrent$.
    \item[$\Delete(x)$]
      Creates $\versionSet{\versionCurrent + 1} = \versionSet{\versionCurrent} \setminus \{ x \}$, increments $\versionCurrent$, and returns $\versionCurrent$.
    \item[$\Range(v, x, y)$]
      Reports all values in $\versionSet{v} \cap [x,y]$ in increasing order.
    \item[$\Search(v, x)$]
      Returns the successor of $x$ in $\versionSet{v}$, i.e., $\min \{ y \in \versionSet{v} \; | \; x \le y \}$.
\end{description}

\subsection{Previous Work}

\begin{table}[t]
    \centering
    \caption{Overview of results on the \IO complexity of ephemeral and partial persistent search trees.
    Results marked by ``am.'' hold amortized, and results marked by ``rand.'' are randomized and hold with high probability. All other results hold in worst case. The parameter
    $\epsilon$ must satisfy $0 < \epsilon < 1$.
    All results assume $M = \OhMega{B}$, further $\dag$ assumes $B = \OhMega{\log N}$ and $M = \OhMega{\max \{ B\log^{\OhTheta{1}} N, B^2 \}}$; $\ddag$ assumes $M = \OhMega{B \logB N}$; and $*$ assumes $M = \OhMega{B^{1-\epsilon}\log_2 (\max_v N_v)}$.
    For both queries and updates in \cite{BenderDFJK20, DasIaconoNekrich22}, we include the multiplicative dependency on $\frac{1}{\epsilon}$ (that can be omitted when treating $\epsilon$ as a constant), allowing, for example, setting $\epsilon = \frac{1}{\log_2 B}$. 
    All ephemeral results use space linear in $N$ and all partial persistence results use space linear in the total number of updates.}
    \label{tab:overview_of_results}
    {
    \newcommand{\lines}[2]{\begin{tabular}{@{}l@{}}#1\\[-0.5ex]\quad #2\end{tabular}}
    \renewcommand{\arraystretch}{1.2}
    \begin{tabular}{l@{}cc}
        & Range Query & Update \\ \hline
        \textbf{Ephemeral} \\
        Bayer and McCreigh~\cite{BayerMcCreight72}
        & $\Oh{\logB N + K/B}$ 
        & $\Oh{\logB N}$ \\
        Brodal and Fagerberg~\cite{BrodalFagerberg03lower}
        & $\Oh{\frac{1}{\epsilon}\logB N + K/B}$ am.
        & $\Oh{\frac{1}{\epsilon B^{1-\epsilon}}\logB N}$ am. \\
        \lines{Bender, Das, Farach-Colton,}{Johnson, and Kuszmaul$^\dag$~\cite{BenderDFJK20}}
        & $\Oh{\frac{1}{\epsilon}\logB N + K/B}$
        & $\Oh{\frac{1}{\epsilon B^{1-\epsilon}}\logB N}$ rand. \\
        Das, Iacono, and Nekrich$^\ddag$~\cite{DasIaconoNekrich22}
        & $\Oh{\frac{1}{\epsilon}\logB N + K/B}$ 
        & $\Oh{\frac{1}{\epsilon B^{1-\epsilon}}\logB N}$ \\[1ex] \hline
        \textbf{Partial Persistent} \\
        \lines{Becker, Gschwind, Ohler,}{Seeger, and Widmayer~\cite{BeckerGOSW96}}
        \smash{\raisebox{-4.25ex}[0pt]{$\left.\rule{0ex}{7.5ex}\right\}$}}
        & \multirow{3}{*}{$\Oh{\logB \Nv + K/B}$}
        & \multirow{3}{*}{$\Oh{\logB \Nv}$} \\
        Varman and Verma~\cite{VarmanVerma07} \\
        Arge, Danner, and Teh~\cite{ArgeDannerTeh03}
        \\
        \emph{This paper (Theorem~\ref{thm:main})}
        & $\Oh{\frac{1}{\epsilon}\logB \Nv + K/B}$ am.
        & $\Oh{\frac{1}{\epsilon B^{1-\epsilon}}\logB \Nv}$ am.
        \\
        \emph{This paper (Theorem~\ref{thm:worst-case-large-cache})$^*$}
        & $\Oh{\frac{1}{\epsilon}\logB \Nv + K/B}$
        & $\Oh{\frac{1}{\epsilon B^{1-\epsilon}}\logB \Nv}$ 
        \\
        \emph{This paper (Theorem~\ref{thm:worst-case-small-cache})}
        & $\Oh{\frac{1}{\epsilon}\logB \Nv + \gamma + K/B}$
        & $\Oh{\frac{1}{B^{1-\epsilon}} \lrParents{ \frac{1}{\epsilon} \logB \Nv + \gamma }}$
        \\
        & \multicolumn{2}{c}{$\gamma = \sort{B^{1-\epsilon} \log_2 \Nv} = \frac{\log_2 \Nv}{B^\epsilon} \logMB \frac{\log_2 \Nv}{B^\epsilon}$}
        \\
        \hline
    \end{tabular} }
\end{table}

In internal memory, the \emph{fat node} and \emph{node copying} techniques can make any ephemeral linked data structure partially-persistent with constant overhead in both time and space, as long as the in-degree of each node in the ephemeral structure is constant~\cite{DriscollSarnakTarjan89}. 
Becker, Gschwind, Ohler, Seeger, and Widmayer~\cite{BeckerGOSW96} and Varman and Verma~\cite{VarmanVerma07} adapted these techniques to B-trees in external-memory.
An elegant application of partial persistence appears in the design of linear space planar point location data structures~\cite{SarnakTarjan86}. 
In this setting, the underlying set consists of segments which are partially ordered (only a pair of segments intersected by a vertical line can be compared). 
To adapt this approach to the external-memory setting, Arge, Danner, and Teh strengthened the partially-persistent B-tree to require only a total order on values alive at any given version, leading to a static external-memory point-location structure~\cite{ArgeDannerTeh03}. 

A different approach to persistence is to interpret it geometrically (Figure~\ref{fig:geometry}), modeling it as a data structure problem on a dynamic set of vertical (or horizontal) segments. 
Kolovson and Stonebraker explored this perspective~\cite{KolovsonStonebraker89}, though their reliance on R-trees led to poor performance guarantees~\cite{Guttman84}. 
More recently, Brodal, Rysgaard, and Svenning~\cite{BrodalRysgaardSvenning23} leveraged this geometric approach to develop \emph{fully persistent} B-trees, which allow both queries and modifications to all past versions in $\Oh{\logB{N_v}}$ \IOs. 
In a fully persistent data structure, updating a version corresponds to cloning it and then applying the modification to the newly cloned version, ensuring that existing versions remain unaffected. 
Such behavior contrasts with \emph{retroactive data structures}~\cite{DemaineIaconoLangerman07}, where updates recursively propagate to cloned versions.

Concurrently with the work on persistent data structures in external-memory, there were significant improvements to external-memory data structures by leveraging buffering techniques to always process multiple updates and/or queries together. 
These include the Buffer Tree by Arge~\cite{Arge03} which can form the basis for external-memory sorting, priority queues and batched dynamic algorithms~\cite{EdelsbrunnerOvermars1985} in amortized $\Oh{\frac{1}{B} \logMB \frac{N}{B}}$ \IOs per operation. 
For a batched operation the answer might not be immediately returned, which is often sufficient, e.g., in many geometric plane-sweep algorithms where only the end result matters. 
For standard (non-batched) data structures, a line of work has investigated the update-query trade-off, beginning with the Buffered Repository Tree~\cite{BuchsbaumGVW00} performing updates in amortized $\Oh{\frac{1}{B} \logB{N}}$ \IOs and queries in $\Oh{\log_2{N}}$ \IOs. 
This was later generalized by the \BepsTree{}~\cite{BrodalFagerberg03lower} which for $\epsilon \approx 0$ corresponds to the Buffered Repository Tree and for $\epsilon \approx 1$ to the standard B-tree. 
The amortized performance of the \BepsTree was improved to high-probability~\cite{BenderDFJK20} and worst-case~\cite{DasIaconoNekrich22} \IO bounds using stronger assumptions on the size of $B$ and $M$ (see Table~\ref{tab:overview_of_results}). 

\subsection{Contribution}

Combining the two lines of research on persistence and buffered data structures has remained an open challenge for the past 20 years, likely due to their seemingly conflicting principles. 
Persistence requires maintaining access to past versions without affecting their structure, while buffers essentially hold updates to past versions before applying them. 
Our work demonstrates that that these two ideas can be effectively unified by developing partially-persistent external-memory search trees that achieve bounds matching those of ephemeral \BepsTrees. 

\begin{theorem}
\label{thm:main}
    Given any parameter $0 < \epsilon < 1$ and $M \geq 2B$, there exist partially-persistent external-memory search trees over any totally ordered set, that support \Insert and \Delete in amortized $\Oh{\frac{1}{\epsilon B^{1 - \epsilon}} \logB{N_v}}$ \IOs, \Search in amortized $\Oh{\frac{1}{\epsilon}\logB{N_v}}$ \IOs, and \Range in amortized $\Oh{\frac{1}{\epsilon}\logB{N_v} + K / B}$ \IOs. 
    Here $\Nv$ denotes the number of values contained in version~$v$, and $K$ the number of values reported by \Range.
    The space usage is linear in the total number of updates.
\end{theorem}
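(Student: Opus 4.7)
The plan is to combine the ephemeral \BepsTree skeleton with the geometric interpretation of persistence used in~\cite{BrodalRysgaardSvenning23}. Each value $x$ that is inserted at version $v_1$ and deleted at version $v_2$ is viewed as a vertical segment at x-coordinate $x$ spanning y-coordinates (versions) $[v_1,v_2]$. Over this segment set I build a \BepsTree whose internal nodes branch on x-coordinate (value), have $\OhTheta{B^\epsilon}$ children, carry a buffer of capacity $\OhTheta{B}$ containing pending update messages (each tagged with the version it was issued at), and whose leaves store the segments of a contiguous value-range in a small partially-persistent sub-structure. The root and internal structure is itself maintained partially-persistently by node-copying, so that topological changes (leaf splits, fanout changes) performed for the sake of the current version do not invalidate pointers from earlier versions.

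For updates I would proceed exactly as in~\cite{BrodalFagerberg03lower}: every \Insert or \Delete is placed as a version-tagged message into the root buffer; once a buffer exceeds its capacity, it is sorted and partitioned across the $\OhTheta{B^\epsilon}$ children, so moving $\OhTheta{B^{1-\epsilon}}$ messages one level down costs $\Oh{1}$ \IOs, i.e.\ $\Oh{1/B^{1-\epsilon}}$ amortized per message per level. Messages reaching a leaf are absorbed into its persistent segment set; when the number of currently-alive values in a leaf exceeds a constant times $B$, I split the leaf persistently (leaving the old copy untouched for queries to older versions), and propagate the split up by a new version-tagged edge in the parent. Because leaf splits are dominated by the deliveries that caused them, their amortized cost folds into the $\Oh{\frac{1}{\epsilon B^{1-\epsilon}}\logB \Nv}$ bound from the ephemeral analysis, using that the height of the ``live'' tree at version $v$ is $\Oh{\frac{1}{\epsilon}\logB \Nv}$.

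A \Search or \Range on version $v$ descends the tree from the root of version $v$ along the search path for the query's x-range, and at each node scans the buffer once for messages whose version is $\le v$ and whose value lies in the queried range. Since a buffer is $\OhTheta{B}$ blocks and the height is $\Oh{\frac{1}{\epsilon}\logB \Nv}$, this contributes $\Oh{\frac{1}{\epsilon}\logB \Nv}$ \IOs. Each touched leaf's persistent sub-structure reports the values alive at version $v$ in the relevant x-interval in $\Oh{1+K'/B}$ \IOs for its $K'$ reported values, yielding the claimed $\Oh{\frac{1}{\epsilon}\logB \Nv + K/B}$ total. Space linearity follows because each buffered message, each leaf slot, and each persistent parent edge created by a split is paid for by a unique update.

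The main obstacle is the entanglement of buffering and persistence: a message sitting in an internal buffer logically belongs to the current version but will only be flushed later, possibly long after many further updates. I must ensure (i) that a past query at version $v$ correctly sees the message only when it is indeed in $v$'s logical set and never double-counts it with its eventual leaf copy, (ii) that when a buffer is flushed the receiving child, whose parent pointer belongs to the current version, is still reachable from all earlier versions that need it, and (iii) that the depth bound is genuinely $\logB \Nv$ rather than $\logB$ of the total update count — which I intend to enforce by rebalancing the ``live'' spine using weight-balance arguments, charging rebalances to updates in the affected subtree. Getting the amortization to respect $\Nv$ rather than the lifetime size is the delicate point; the rest reduces to verifying that the standard \BepsTree flush potential still dominates the extra cost introduced by persistent node-copying during splits.
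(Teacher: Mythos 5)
Your high-level architecture matches the paper's (geometric view of persistence, a \BepsTree over value ranges with version-tagged buffered messages, node/path copying for structural changes), but the two difficulties you flag at the end are exactly the ones the paper has to resolve with specific ideas that are missing from your proposal, and your sketch as written does not go through. The central gap is your treatment of buffers under persistence. You propose that a query at version $v$ descends the version-$v$ tree and scans each buffer for messages with version $\le v$. For this to be correct, the buffer contents as of version $v$ must be recoverable, which means every buffer flush (one per $\OhTheta{B^{1-\epsilon}}$ updates, moving messages one of $\OhTheta{\frac{1}{\epsilon}\logB \Nv}$ levels down) must persist a new copy of the modified node; that costs $\OhTheta{1}$ blocks per flush and destroys linear space by a factor of $\OhTheta{B^{\epsilon}\cdot\frac{1}{\epsilon}\logB \Nv}$. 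If instead buffers are ephemeral (only the current tree has them), then a message issued at version $v'$, still buffered when version $v>v'$ is created, and flushed later, is visible neither in the version-$v$ snapshot of the leaf nor reliably via the current tree (whose value-range decomposition has since changed). The paper's resolution is to make buffered messages pertain \emph{only} to leaves that are still ``open'' (extend to the current version): a leaf is fully actualized before it is closed, past queries to closed leaves never consult buffers at all (they go through a separate point-location structure), and a query touching an open leaf first actualizes it. Nothing in your proposal plays this role.

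The second concrete problem is your leaf granularity of $\OhTheta{B}$ alive values. Both the actualization of a leaf and the path copy triggered by closing/splitting it cost $\OhTheta{H}$ \IOs and $\OhTheta{H}$ new blocks, where $H=\OhTheta{\frac{1}{\epsilon}\logB\Nv}$ is the height. To amortize this to $\Oh{1/B^{1-\epsilon}}$ \IOs and $\Oh{1/B}$ blocks per update, a leaf must absorb $\OhMega{BH}$ updates before it is rebuilt; with $\OhTheta{B}$-sized leaves you get superlinear space (a factor $H$ too much) and the per-reported-value cost of an intermediate leaf in a \Range query becomes $\OhTheta{H/B}$ rather than $\Oh{1/B}$. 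The paper therefore sizes its leaves (rectangles) at $\OhTheta{\frac{1}{\epsilon}B\logB\Nbar}$ updates and additionally enforces a ``spanning'' invariant guaranteeing $\OhMega{\frac{1}{\epsilon}B\logB\Nbar}$ values alive in \emph{every} version a closed leaf covers, which is what makes the $K/B$ reporting term honest for past versions with many deletions. Finally, your weight-balance idea for getting $\logB\Nv$ rather than $\logB$ of the lifetime update count is replaced in the paper by periodic global rebuilding into epochs in which all versions have size $\OhTheta{\Nbar}$; some such mechanism is needed, since rebalancing alone does not bound the height of the \emph{snapshot} trees of old versions against later growth and shrinkage.
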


The query \Search can trivially also answer a member query ``$x\in \versionSet{v}$?'' by checking if $\Search(v, x)$ returns $x$.
Our data structure can further also support predecessor queries instead of successor queries, as well as strict predecessor and successor queries, i.e., the returned value should be strictly smaller or larger than the query value $x$.
The structure can also handle the case when $\versionSet{0} \ne \emptyset$, where the initial structure can be constructed using $\Oh{1+|\versionSet{0}|/B}$ \IOs
(essentially this is Section~\ref{sec:global_rebuild}, where a structure is constructed for a given set).
Our data structure is stated as maintaining a set of values, but it can easily be extended to support dictionaries storing key-value pairs 
(each segment in Figure~\ref{fig:geometry} and Figure~\ref{fig:geometric_partitioning} now stores a key-value pair, where the first axis now are keys; 
changing the value for key~$x$ at version $v$ starts a new vertical segment at $(x, v)$ with the new value). 

In Section~\ref{sec:worst-case} we describe how to convert the amortized \IO bounds of Theorem~\ref{thm:main} to worst-case bounds under the assumption that 
$M = \OhMega{B^{1-\epsilon} \log_2 (\max_v {N_v})}$ (Theorem~\ref{thm:worst-case-large-cache}),
a weaker or equal assumption on the memory size than used in \cite{BenderDFJK20} and \cite{DasIaconoNekrich22} for high-probability and worst-case bounds for \BepsTrees, respectively. 
Under the weakest assumption that $M  \geq 2B$,  we achieve the worst-case bounds in Theorem~\ref{thm:worst-case-small-cache} with an additional term of at most $\Oh{\sort{B^{1-\epsilon} \log_2 \Nv}}$ \IOs, 
where $B^{1-\epsilon} \log_2 \Nv$ is an upper bound on the number of buffered updates on a root-to-leaf path that should be flushed to the leaf, and $\sort{N}=\OhTheta{\frac{N}{B}\logMB \frac{N}{B}}$ denotes the number of \IOs to sort $N$ values~\cite{AggarwalVitter88}. For updates, where the \IO bound is~$o(1)$, we assume that the \IOs are performed evenly spread out among the updates.

\begin{theorem}
\label{thm:worst-case-large-cache}
    Given any parameter $0 < \epsilon < 1$ and $M = \OhMega{B^{1-\epsilon}\log_2 (\max_v N_v)}$, there exist partially-persistent external-memory search trees over any totally ordered set, that support \Insert and \Delete in worst-case $\Oh{\frac{1}{\epsilon B^{1 - \epsilon}} \logB{N_v}}$ \IOs, \Search in worst-case $\Oh{\frac{1}{\epsilon}\logB{N_v}}$ \IOs, and \Range in worst-case $\Oh{\frac{1}{\epsilon}\logB{N_v} + K / B}$ \IOs. 
    Here $\Nv$ denotes the number of values contained in version $v$, and $K$ the number of values reported by \Range.
    The space usage is linear in the total number of updates.
\end{theorem}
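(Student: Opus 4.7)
The plan is to de-amortize the structure of Theorem~\ref{thm:main} by converting each buffer flush into a scheduled, incremental operation whose \IOs are spread across the $\OhTheta{B}$ updates that will subsequently fill the same buffer, while leveraging the enlarged internal memory to hold the state of all in-flight flushes so that coordination incurs no additional \IOs.

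First, I would modify each internal node to maintain two buffers: a \emph{working} buffer that absorbs updates arriving from above, and a \emph{pending} buffer that is being flushed to the node's children. When the working buffer reaches its capacity of $\OhTheta{B}$, the roles are swapped and a new flush is scheduled. Since a single flush of a buffer of $\OhTheta{B}$ elements to $\OhTheta{B^\epsilon}$ children costs $\Oh{B^\epsilon}$ \IOs, and $\OhTheta{B}$ further updates must arrive before the working buffer can overflow again, allotting a quantum of $\Oh{1/B^{1-\epsilon}}$ \IOs per update to the pending flush at each level suffices to complete every scheduled flush before it is superseded. Cascading flushes triggered at deeper levels are scheduled analogously; summing over the $\Oh{\frac{1}{\epsilon}\logB \Nv}$ levels yields the worst-case update bound of the theorem.

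Second, I would use the memory assumption $M = \OhMega{B^{1-\epsilon}\log_2 (\max_v N_v)}$ to keep the descriptor of every active flush resident in internal memory. At most one flush is active per level, and its descriptor consists of a handle to the node, a pointer into its sorted pending buffer, and the $\Oh{B^{1-\epsilon}}$ elements currently being routed toward the next child, for a total of $\Oh{B^{1-\epsilon}}$ words per level. Summed over the $\Oh{\log_2 \Nv}$ depth of the tree, the pipeline state fits in $\Oh{B^{1-\epsilon}\log_2 \Nv}$ words, matching the assumption. A query descending the tree in version $v$ additionally consults the cache-resident descriptor at each node it visits, capturing any updates to version~$v$ that have already been logically applied but are not yet materialised on disk; this consultation is free of \IOs, so the query bounds are identical to those of Theorem~\ref{thm:main}.

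The main obstacle will be ensuring that the geometric persistence mechanism remains consistent under incremental flushes: a flush normally materialises a batch of vertical segments representing the pushed-down elements, and every intermediate version must see exactly the segments that are alive at its timestamp. I would maintain, for each active flush, a small in-memory log of segments scheduled for external writing so that queries consult the log whenever their path crosses an active flush; once the flush completes, the log is discarded after its contents have been persisted. The delicate part of the argument is the accounting: one must show that writing these segments at the rate of $\Oh{1/B^{1-\epsilon}}$ \IOs per traversing update completes before the working buffer overflows again, that cascades at deeper levels do not interfere (which follows since their overflow rates strictly decrease with depth), and that \Range queries remain unaffected because their output is produced by scanning leaves that are never touched by an in-progress flush of an internal buffer.
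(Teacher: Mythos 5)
There is a genuine gap: your proposal deamortizes the \emph{update} path (which is the easier half, and your scheduling of $\Oh{1/B^{1-\epsilon}}$ \IOs per update per level is in the right spirit of the paper's spreading of a root-to-leaf flush over $F$ updates), but it never addresses the actual bottleneck for worst-case \emph{queries}, namely actualization. A \Search or \Range query that lands in an open rectangle must merge all buffered updates relevant to that rectangle from every buffer on the root-to-leaf path. Under the standard flushing strategy each such buffer can legitimately hold $\OhTheta{B}$ updates on disk all destined for that one leaf --- these are not part of any in-flight flush, so your in-memory descriptors and logs (which you budget at only $\Oh{B^{1-\epsilon}}$ words per level) do not capture them. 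Merging $\OhTheta{HB}$ path-relevant updates takes $\Oh{H^2}$ \IOs worst case by top-down merging, or $\Oh{\sort{HB}}$ by sorting, and neither is $\Oh{H}$ under the assumption $M = \OhMega{B^{1-\epsilon}\log_2 N_v}$; it would be under the stronger assumption $M=\OhMega{HB}$ used in prior work, which is exactly what this theorem improves upon.

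The missing idea is the paper's changed flushing discipline together with the subtracting game of Dietz and Raman: every $F$'th update, flush along an entire root-to-leaf path, always toward the child receiving the most buffered updates. Viewing each node as playing the subtracting game on its (at most $2\Delta$) per-child counters shows that any buffer holds only $\Oh{F\log_2\Delta}$ updates destined for any single child, hence the total number of buffered updates relevant to one leaf is $\Oh{HF\log_2\Delta}=\Oh{B^{1-\epsilon}\log_2\Nbar}$, which fits in the assumed memory and lets actualization run in worst-case $\Oh{H}$ \IOs. (One must also check the game survives node splits and rectangle merges, which the paper does.) Without this, your claim that ``\Range queries remain unaffected because their output is produced by scanning leaves that are never touched by an in-progress flush'' is false --- leaves do receive flushes, and more importantly the query answer depends on buffered updates that have not yet reached the leaf, so they must be collected and merged within the stated \IO budget.
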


\begin{theorem}
\label{thm:worst-case-small-cache}
    Given any parameter $0 < \epsilon < 1$ and $M \geq 2B$, there exist partially-persistent external-memory search trees over any totally ordered set, that support \Insert and \Delete in worst-case $\Oh{\frac{1}{B^{1-\epsilon}} \lrParents{ \frac{1}{\epsilon} \logB \Nv + \gamma }}$ \IOs, \Search in worst-case $\Oh{\frac{1}{\epsilon}\logB{N_v} + \gamma}$ \IOs, and \Range in worst-case $\Oh{\frac{1}{\epsilon}\logB{N_v} + \gamma + K / B}$ \IOs,
    where $\gamma = \sort{B^{1-\epsilon} \log_2 \Nv}$. 
    Here $\Nv$ denotes the number of values contained in version $v$, and $K$ the number of values reported by \Range.
    The space usage is linear in the total number of updates.
\end{theorem}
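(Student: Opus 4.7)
The plan is to start from the worst-case data structure of Theorem~\ref{thm:worst-case-large-cache} and to examine the single place where the stronger cache assumption $M = \OhMega{B^{1-\epsilon} \log_2 N_v}$ is actually used. This assumption is invoked exactly when the algorithm needs to hold all potentially-relevant buffered updates along a root-to-leaf path in internal memory, either (i)~in order to merge them against the path to answer a \Search or \Range query, or (ii)~in order to resort and redistribute them into the children of a node that is undergoing a flush cascade. In both cases the number of items involved is at most $B^{1-\epsilon} \log_2 N_v$, since each of the $\Oh{\frac{1}{\epsilon}\logB N_v}$ levels of the tree contributes at most $B^{1-\epsilon}$ items destined for any fixed child subtree.

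Each such in-memory operation I would replace by an external-memory sort on the at most $B^{1-\epsilon} \log_2 N_v$ relevant items, which by Aggarwal and Vitter~\cite{AggarwalVitter88} costs $\Oh{\sort{B^{1-\epsilon} \log_2 N_v}} = \Oh{\gamma}$ \IOs and requires only $M \geq 2B$. The sorted stream is then consumed exactly as in Theorem~\ref{thm:worst-case-large-cache}, so correctness carries over unchanged. Since a query performs this sort once, \Search pays an additive $\Oh{\gamma}$ and \Range pays $\Oh{\gamma + K/B}$ on top of the $\Oh{\frac{1}{\epsilon}\logB N_v}$ path-traversal cost, yielding the two query bounds of the theorem.

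For \Insert and \Delete I would charge the external sort to the full buffer of $\OhMega{B^{1-\epsilon}}$ items that the flush jointly processes. Combining this with the de-amortization from Theorem~\ref{thm:worst-case-large-cache}, which already spreads the cost of a flush cascade evenly over the $\OhMega{B^{1-\epsilon}}$ updates responsible for triggering it, the additional sorting cost becomes worst-case $\Oh{\gamma / B^{1-\epsilon}}$ per update. Summing with the traversal cost of a flush cascade gives the claimed bound $\Oh{\frac{1}{B^{1-\epsilon}} \lrParents{\frac{1}{\epsilon}\logB N_v + \gamma}}$ per update.

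The main obstacle I anticipate is fitting the external sort itself into the de-amortization schedule: no single update may pay more than $\Oh{\gamma / B^{1-\epsilon}}$ \IOs toward an in-progress flush, so the merging passes of the Aggarwal--Vitter sort must be pipelined into constant-sized \IO chunks and their partial state checkpointed to external memory between consecutive updates. Doing so requires ensuring that a query arriving between two scheduled slices of a pending flush still sees a consistent view of the structure, i.e., that the persistence invariants are never transiently violated by an in-progress external sort. I expect this to be handled by standard techniques: either queries treat the unfinished portion of the pending sort as ``not yet applied'' and re-read its source blocks on the fly, or each flush is split along version boundaries so that the visible state always corresponds to a fully-completed prefix of the scheduled work.
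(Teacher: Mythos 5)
Your reduction works for \Search and for updates: replacing the in-memory sort of the at most $B^{1-\epsilon}\log_2 N_v$ path-buffered updates by an external sort costs $\Oh{\gamma}$ \IOs, and charging it to the $\OhMega{B^{1-\epsilon}}$ updates that trigger a flush gives the claimed update bound. This matches what the paper does for those two operations.

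The genuine gap is in your treatment of \Range. You assert that ``a query performs this sort once,'' but a \Range query intersects $\OhTheta{1 + K/R}$ open rectangles, each with its own root-to-leaf path of relevant buffered updates, and these update sets are disjoint across rectangles. Sorting per rectangle gives $\OhTheta{(1+K/R)(H+\gamma)}$, which carries a non-constant multiplicative overhead of roughly $\gamma/(R/B)$ on the reporting term $K/B$; this is exactly the obstacle the paper identifies and spends most of Section~\ref{sec:WC_improving_range_queries} overcoming. The paper's fix is to virtually actualize all intersected rectangles simultaneously, merging buffered updates down layer by layer, and to prove that the total number of relevant buffered updates is $\Oh{B^{1-\epsilon}\log_2 N_v + K/H}$. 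That bound in turn requires new structural invariants your proposal does not supply: buffers of degree-one nodes must be empty (achieved by scaling buffer capacity with node degree), the boundary paths $p_x,p_y$ contribute only $\Oh{B^{1-\epsilon}\log_2 N_v}$ updates via the subtraction game, the hanging subtrees contribute $\Oh{K/H}$, and the finalization rule must be changed (finalize the rectangle with the most received updates, analyzed via the zeroing game of Dietz and Sleator) to keep rectangle sizes bounded without cascading merges. Without these ingredients the stated \Range bound $\Oh{\frac{1}{\epsilon}\logB N_v + \gamma + K/B}$ does not follow. Your concerns about pipelining the external sort into the de-amortization schedule are legitimate but secondary; the paper resolves them simply by completing any in-progress incremental update (costing $\Oh{H+\gamma}$) before serving a query.
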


Note that, for example, when $\Nv = 2^{\Oh{\Beps}}$ then $B^{1-\epsilon} \log_2 \Nv=\Oh{B}$ and $\gamma=\Oh{1}$ and the \IO bounds of Theorem~\ref{thm:worst-case-small-cache} match those of Theorem~\ref{thm:worst-case-large-cache}, with only the assumption $M\geq 2B$.
This observation can be further strengthened, as when $\gamma = \Oh{\frac{1}{\epsilon} \logB \Nv}$ the bounds \IO match similarly, which holds when $\Nv = 2^{B^\epsilon \lrParents{\frac{M}{B}}^{\Oh{\frac{B^\epsilon}{\epsilon \log_2 B}}}}$. 

\subparagraph*{Outline of Data Structure}

Previous work on partially-persistent search trees in external memory directly adapted the general pointer-based transformations for persistence~\cite{DriscollSarnakTarjan89}. 
In contrast, our approach embraces the geometric interpretation of partial persistence (see Figure~\ref{fig:geometry}) similar to that of~\cite{BrodalRysgaardSvenning23},
where the state of the data structure is embedded in a two-dimensional plane with values on the first axis and versions on the second axis. 
Under this interpretation, each update corresponds to the start or end of a vertical segment in the plane. 
Since partial persistence updates are
applied to the current version, it always affects the top of the plane. 
Successor and predecessor queries correspond to horizontal ray shooting to the right and left, respectively, and range queries to reporting the intersections between a horizontal query segment among vertical segments.

To efficiently update and query the geometric view, we partition the plane into rectangles, each containing $\OhTheta{\frac{1}{\epsilon} B \logB \Nbar}$ vertical segments in lexicographic order. 
For now we assume that all versions have size $\OhTheta{\Nbar}$, for a fixed $\Nbar$ (this assumption is lifted using global rebuilding, see Section~\ref{sec:global_rebuild}). 
In the geometric persistent view, a vertical segment crossing multiple rectangles is split into multiple smaller segments, one for each rectangle, and each smaller segment is inserted into one rectangle.

At a high level, our data structure is divided into two parts. The top part consists of all the open rectangles containing the current set $\versionSet{\versionCurrent}$, which may still be updated. 
The entry point of this data structure is a $B^{\epsilon}$-tree $T$ on the value axis to facilitate buffered updates and to find the relevant rectangle(s) for updates and queries. 
Since updates are buffered, the geometric view stored in the rectangles may be incomplete, since buffered updates (segment endpoints) will first be added to the rectangle when buffers are flushed.
The bottom part consists of all the finalized rectangles, i.e., rectangles which can be queried but not updated. 
The entry point to the bottom part is a data structure $P$ to find the relevant rectangle(s) for a query. 
This corresponds exactly to a point location problem and we implement $P$ as an external-memory adaption of the classical planar point location solution using partial persistence~\cite{ArgeDannerTeh03,SarnakTarjan86}, more specifically a B-tree with path copying during updates. 

\section{The Buffered Persistent Data Structure}
\label{sec:amortized-structure}

In this section, we describe our partially-persistent \BepsTree structure.
Versions are identified be the integers $0, 1, 2, \ldots$,  where $\versionCurrent$ denotes the identifier of the current version. We let $\versionSet{v}$ denote the set at version~$v$, where values are from some totally ordered set.
The initial set $\versionSet{0} = \emptyset$, and $\versionSet{v + 1} = \versionSet{v} \cup \{ x \}$ if the $v+1$'th update is $\Insert(x)$, and $\versionSet{v + 1} = \versionSet{v} \setminus \{ x \}$ if the $v+1$'th update is $\Delete(x)$.
Note that $\versionSet{v+1}=\versionSet{v}$ if the $(v+1)$'th update inserts a value already in~$\versionSet{v}$ or is deleting a value not in~$\versionSet{v}$.

\subsection{Geometric Interpretation of Partial Persistence}
\label{sec:persistence_geometry}

The problem has a natural geometric interpretation in a two dimensional space, with the first dimension representing the values and the second dimension representing the versions, see Figure~\ref{fig:geometry}.
On this two dimensional plane, a value $x$ existing in versions $[v, w[$, can be represented by the vertical line segment $\{x\} \times [v, w[$, i.e., $x$ is inserted in version $v$ and deleted in version $w$.
If $x \in \versionSet{\versionCurrent}$, then $w = +\infty$ ($x$ has not been deleted yet).

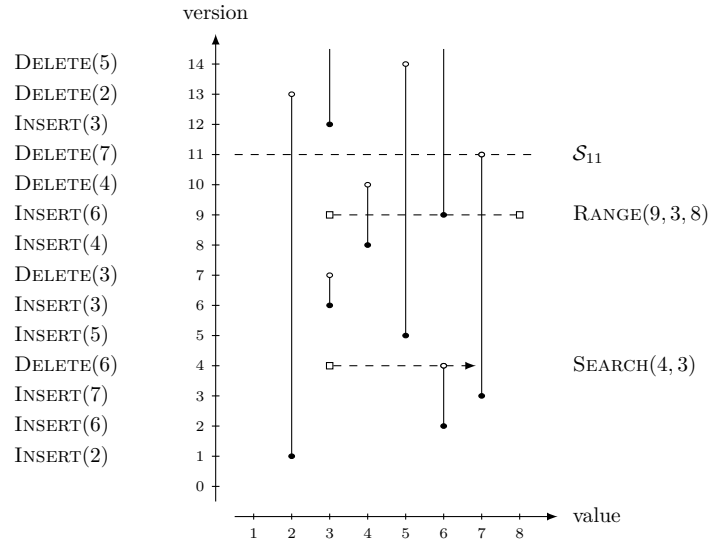
\begin{figure}
    \centering
    \begin{tikzpicture}[scale=0.5, every node/.style={scale=0.8}, yscale=0.8]
        \draw[dashed] (0.5,11) -- (8.5,11);
        \node[inner sep=1.5, rectangle, draw] at (3,9) (x) {};
        \node[inner sep=1.5, rectangle, draw] at (8,9) (y) {};
        \node[inner sep=1.5, rectangle, draw] at (3,4) (z) {};
        \coordinate (z') at (6.85,4) {};
        \draw[dashed] (x) -- (y);
        \draw[-latex,dashed] (z) -- (z');
        \node[label={right:$\versionSet{11}$}] at (9,11) {};
        \node[label={right:$\Range(9, 3, 8)$}] at (9,9) {};
        \node[label={right:$\Search(4, 3)$}] at (9,4) {};

        \foreach \x in {1, 2, ..., 8} {
          \draw (\x, -1.1) -- +(0, 0.2);
          \node[label={below:\scriptsize\x}] at (\x, -0.9) {};
        }
        \draw[-latex] (0.5,-1) -- (9,-1);
        \node[label={[yshift=1pt]right:value}] at (9,-1) {};

        \foreach \y in {0, 1, ..., 14} {
          \draw (-0.1, \y) -- +(0.2, 0);
          \node[label={left:\scriptsize\y}] at (0.1, \y) {};
        }
        \draw[-latex] (0,-0.5) -- (0,15);
        \node[label={above:version}] at (0,15) {};

        \CLOSEDSEGMENTS{ 2/1/13, 3/6/7, 4/8/10, 5/5/14, 6/2/4, 7/3/11 }
        \OPENSEGMENTS{14.5}{ 3/12, 6/9 }

        \draw[black, fill=white] (6, 4) circle (0.075);
        \draw[black, fill=white] (7, 11) circle (0.075);

        \foreach \instr [count=\y] in {\Insert(2), \Insert(6), \Insert(7), \Delete(6), \Insert(5), \Insert(3), \Delete(3), \Insert(4), \Insert(6), \Delete(4), \Delete(7), \Insert(3), \Delete(2), \Delete(5)} {
            \node[anchor=west] at (-5.5, \y) {$\instr$};
        }
    \end{tikzpicture}
    
    \caption{(Left) A list of updates performed on an initially empty set and (Right) the geometric interpretation of the updates. Vertical lines illustrate the interval of versions containing a value. Note that the value $4$ is contained in versions $[8,10[$, i.e., versions $8$ and $9$, 
    whereas the value~$3$ is contained in version $6$ and versions $[12, \infty[$.
    The topmost dashed line shows that version $11$ of the set is $\versionSet{11}=\{ 2, 5, 6 \}$, the dashed line segment at version $9$ shows that the result of the query $\Range(9, 3, 8)$ is $\{ 4, 5, 6, 7 \}$, and the bottommost dashed arrow shows that the result of the successor search $\Search(4, 3)$ is $7$.}
    \label{fig:geometry}
\end{figure}

\subsection{Partitioning the Plane into Rectangles}
\label{sec:partitioning}

\begin{figure}[t!]
    \centering

    \newcommand{\AddBuffer}[2]{
        \draw ($(#1.east) + (0.25, -0.2)$) rectangle +(1.4, 0.4);
        \foreach[parse=true] [count=\i] \f in {#2}
            \node[circle, draw, inner sep=1.5, lightgray, fill=\f] at ($(#1.east) + (0.25 + \i * 0.2, 0)$) {};
    }
    
    \begin{tikzpicture}[every node/.style={scale=0.9}]
        \CLOSEDRECTANGLE{0,0}{3}{4}{\small $r_1$}
        \OPENRECTANGLE{0,4}{1.5}{3}{}
        \OPENRECTANGLE{1.5,4}{1.5}{3}{}
        \CLOSEDRECTANGLE{3,0}{3.5}{2.5}{\small $r_2$}
        \CLOSEDRECTANGLE{3,2.5}{2}{2.5}{}
        \OPENRECTANGLE{3,5}{2}{2}{\small $r_5$}
        \CLOSEDRECTANGLE{6.5,0}{1}{1}{}
        \CLOSEDRECTANGLE{7.5,0}{1.5}{1}{}
        \CLOSEDRECTANGLE{6.5,1}{2.5}{1.5}{\small $r_3$}
        \OPENRECTANGLE{5,2.5}{4}{4.5}{\small $r_6$}
        \OPENRECTANGLE{9,0}{1.5}{7}{\small $r_4$}

        \CLOSEDSEGMENTS{
            0.5/0.3/2.1,
            0.75/3.5/4,
            1.2/2.4/4,
            1/0.75/1.7,
            2/0/4, 2/4/4.6,
            2.4/0/0.8,
            3.5/0/2.5, 3.5/2.5/5, 3.5/5/5.6,
            4/0.9/1.8, 4/3.2/3.9,
            4.2/4.5/5,
            4.4/2/2.5, 4.4/2.5/3.7,
            5.4/0/2.5, 5.4/2.5/2.9,
            6.3/3/4.3,
            6.8/0/1,
            7.2/0.5/1, 7.2/1/2.5, 7.2/2.5/3.4,
            7.9/0/0.7,
            8.3/0/1, 8.3/1/1.9
        }
        \OPENSEGMENTS{6.9}{
            0.75/4,
            2.5/4,
            4.2/5,
            8/2.5,
            9.95/0
        }
        \FADESEGMENTS{
            1.2/4/5.5,
            7/4.1/5.2,
            9.4/0.6/4.9
        }
        \BUFFEREDCLOSEDSEGMENTS{
            6.5/5.9/6.5,
            7.5/4.7/6
        }
        \BUFFEREDOPENSEGMENTS{6.9}{
            0.25/5,
            4.6/6.3,
            8.6/5.3,
            9.6/6.2
        }

        \draw[black, fill=white] (0.75,4) circle (0.075);
        \draw[black, fill=white] (6.8, 1) circle (0.075);
        \draw[black, fill=white] (4.2, 5) circle (0.075);

        \node[inner sep=1.5, rectangle, draw] at (1.6,1.5) (x) {};
        \node[inner sep=1.5, rectangle, draw] at (9.7,1.5) (y) {};
        \node[inner sep=1.5, rectangle, draw] at (4.5,5.55) (z) {};
        \coordinate (z') at (7.45,5.55) {};
        
        \draw[dashed] (x) -- (y);
        \draw[-latex,dashed] (z) -- (z');
        
        \node[label={below:$x$}] at (1.6,0) {};
        \node[label={below:$y$}] at (9.7,0) {};
        \node[label={below:$z$}] at (4.5,0) {};
        \node[label={left:$v$}] at (0,1.5) {};
        \node[label={left:$w$}] at (0,5.55) {};

        \draw[|-|] (0,-1) -- +(10.5,0) node[midway, fill=white, yshift=1pt] {value};
        \draw[|-latex] (-1,0) -- +(0,7) node[midway, fill=white, rotate=90, yshift=1pt] {version};

        \node[circle, draw, inner sep=6] (A) at (5.125, 11) {};
        \node[circle, draw, inner sep=6] (B) at (2.5, 9) {};
        \node[circle, draw, inner sep=6] (C) at (7.75, 9) {};
        \AddBuffer{A}{lightgray, white, white, lightgray}
        \AddBuffer{B}{lightgray, white}
        \AddBuffer{C}{lightgray, white, lightgray, lightgray, white}
        \draw (A) -- (B);
        \draw (A) -- (C);
        \draw (B) -- (0.75, 7.5);
        \draw (B) -- (2.25, 7.5);
        \draw (B) -- (4.0, 7.5);
        \draw (C) -- (7.0, 7.5);
        \draw (C) -- (9.75, 7.5);

        \node at (3.9,8.4) {$< 2\Delta$};
        \node at (9,9.5) {$< 2\Delta F$};
        \draw[decorate,decoration={brace,amplitude=10pt}] (1,8.5) -- node[yshift=5ex,sloped,rotate=90] {$H$} +(0,3);
        \draw[dashed] (1.5,8.5) arc (230:310:1.5);
    \end{tikzpicture}
    
    \caption{(Top) A \BepsTree of the open rectangles and (Bottom) the geometric interpretation of the updates, split into multiple rectangles, where gray rectangles are open rectangles.
    The black endpoints represent updates present in the rectangle and the gray endpoints represent buffered updates present in the buffers at the internal nodes of the \BepsTree.
    A query $\Range(v, x, y)$ is represented as the dashed line between two square endpoints, spanning rectangles $r_1$, $r_2$, $r_3$, and~$r_4$, and a successor query $\Search(w, z)$ is represented as the dashed arrow from a square endpoint, spanning rectangles $r_5$ and $r_6$.
    Black dots on vertical segments correspond to the upper endpoint of the segment in the rectangle below and the lower endpoint of the segment in the rectangle above.}
    \label{fig:geometric_partitioning}
\end{figure}

We consider the sequence of versions partitioned into intervals 
$[v_0,v_1[, \ldots, [v_{k-1}, v_k[, [v_k,\infty[$, 
for some versions $0=v_0 < v_1 < \cdots < v_k \leq v_c$. 
In Section~\ref{sec:global_rebuild} we show how to maintain the version intervals.
In the following we consider the interval $[\vbar, \infty[$ containing the current version $\versionCurrent$ of the set.  Let $\Nbar = |\versionSet{\vbar}|$.
We allow up to $c \cdot \Nbar$ partial persistent updates during this interval of versions for a constant $0 < c < 1$, i.e., 
all versions~$v$, $\vbar \le v \le \versionCurrent$, satisfy $(1 - c) \cdot \Nbar \le |\versionSet{v}| \le (1 + c) \cdot \Nbar$.

Our data structure is built around four central parameters:
\[
\Delta = \lrCeil{\Beps} \quad\quad
H = 1 + \lrCeil{\log_{\Delta} \Nbar \, } \quad\quad
F = \lrCeil{ B^{1 - \epsilon} } \quad\quad
R = H \cdot 2\Delta \cdot F 
\]
The basic idea is to have a \BepsTree $T$ of degree at most $2\Delta$ (and degree at least $\Delta$, if only insertions can be performed, and degree at least 1 if deletions are allowed), where leaves (open rectangles) store between $4R$ and $10R$ updates (see Section~\ref{sec:updating_buffer_rectangles}) and all leaves are at the same layer. 
In Section~\ref{sec:constant_computation} we prove that $H$ is an upper bound on the height of~$T$ (number of nodes on a root-to-leaf path, excluding the leaves). 
Each internal node of $T$ will have a buffer of at most $2\Delta F = \OhTheta{B}$ updates yet to be applied to the leaves of the subtree rooted at the node. 
Note that $R$ is an upper bound on the total number of buffered updates along a root-to-leaf path in~$T$.
The essential property of the parameters is that 
$R/B= \OhTheta{H}=\OhTheta{\frac{1}{\epsilon}\logB \Nbar}$.

The geometric plane defined in Section~\ref{sec:persistence_geometry} 
is partitioned into a number of axis aligned rectangles~$[x, y[ \times [v, w[$, such that the number of updates in each rectangle is $\OhTheta{R}$.
For each rectangle we store a list of the updates in the rectangle in lexicographical order by first the value and secondly the version of that update. Note that this groups equal values consecutively in the list.
To allow efficiently locating a rectangle for a given version and value, we store a list indexed by version identifier, where we for each version~$v$ store a pointer to the root of a B-tree $P_v$ over the rectangles left-to-right containing~$\versionSet{v}$ (see Section~\ref{sec:updating_buffer_rectangles}).
Further, we require that each rectangle contains $\OhMega{R}$ values which are present in all versions the rectangle spans.
We denote such a value as \emph{spanning}. 
If $\Nbar = \Oh{R}$, all updates are stored in a single list. 

New updates are buffered, to achieve \IO efficient update bounds. The topmost rectangles, which cover the current version $\versionCurrent$, are all \emph{open}, with all other rectangles being \emph{closed}.
Crucially, new updates are always performed in the current version.
We maintain the invariant that for a buffered update, i.e., an update that has not yet been flushed to the corresponding rectangle, the corresponding rectangle must be open.

For the open rectangles, we store a \BepsTree $T$, such that recent updates to the open rectangles are buffered.
We let 
the maximum degree of an internal node in $T$ be $2 \Delta$.
Each internal node of $T$ contains a buffer of up to $2\Delta F$ updates, sorted lexicographically by value and version. Additionally, each update stores if the update is an insertion or deletion.
Consider a full buffer, i.e., it contains at least $2\Delta F$ updates, where each update should be \emph{flushed} to one of the at most $2\Delta$ children. Then, there must exist a subset of size at least $F$ updates, which should be flushed to the same child. 

The setup is illustrated on Figure~\ref{fig:geometric_partitioning}.
The vertical black and gray lines represent the version intervals containing a value.
The black lines represent updates present in the list of updates contained in that rectangle, while the gray lines and endpoints represent updates contained in buffers of $T$, which are illustrated at the top of the figure.

\subsection{Handling Queries and Updates}
\label{sec:query_update}

When performing $\Search(v, x)$, first the rectangle~$r$ covering point $(x, v)$ in the plane must be found. By using the B-tree $P_v$
associated with version $v$, $r$ can be found using $\Oh{H}$ \IOs.
If $r$ is closed, then all updates inside $r$ are contained in the sorted list of updates stored in $r$, and these can be scanned in $\Oh{R / B}$ \IOs.
If $r$ is open, then the result of the successor query may be affected by buffered updates, which are not stored in $r$. 
The rectangle~$r$ must therefore be \emph{actualized}, by merging all updates in buffers on the path from the root to $r$ with the updates in $r$.
The details of this operation is described in Section~\ref{sec:updating_buffer_rectangles}, where the actualize operation is shown to have an amortized $\Oh{H}$ \IOs.
After $r$ is actualized, the query continues by scanning the updates of $r$.
If the result of the successor query is not contained in the rectangle~$r$, then by the spanning requirement, the result of the query must be in the neighboring rectangle to the right, that similarly is actualized if it is open.
In total, the operation spends amortized $\Oh{H + R/B} = \Oh{\frac{1}{\epsilon} \logB \Nbar}$ \IOs. 
Note that the operation easily can be modified to support member, predecessor, and strict predecessor or successor queries. 

A $\Range(v, x, y)$ query is performed very similar to a \Search query. The query may however touch more than two rectangles.
Note that for the at most two rectangles containing the endpoints of the query we do not necessarily report all values they contain at version~$v$. These rectangles can be found using amortized $\Oh{H + R/B}$ \IOs by the argument above.
For each intermediate rectangle accessed (and possibly actualized if it is open), then by the spanning requirement, a constant fraction of the updates scanned result in reported values.
Since accessing a rectangle takes amortized $\Oh{H + R/B}$ \IOs, and each rectangle contains $\OhTheta{R} = \OhTheta{B \cdot H}$ values at version $v$, then amortized $\Oh{1 / B}$ \IOs are spent for each value reported for an intermediate rectangle.
In total, a $\Range$ query reporting $K$ values takes amortized $\Oh{H + R/B + K/B} = \Oh{\frac{1}{\epsilon} \logB \Nbar + K / B}$ \IOs.

Each update, either an \Insert or \Delete, is applied to the current version $\versionCurrent$ of the set.
The \BepsTree $T$ contains all buffered updates to the open rectangles, which cover the current version~$\versionCurrent$.
For an update operation, a tuple with the update and $\versionCurrent$ is added to the root buffer of $T$, which is stored in internal memory.
In Section~\ref{sec:updating_buffer_rectangles} it is shown that adding the update to the root buffer and handling possible \emph{buffer overflows} takes amortized $\Oh{H / F}=\Oh{\frac{1}{\epsilon B^{1-\epsilon}} \logB \Nbar}$ \IOs.

\subsection{Flushing Buffers}
\label{sec:updating_buffer_rectangles}

To argue about the amortized cost of flushing the content of buffers down the tree~$T$, we let the potential of each buffered update be $1 / F$ multiplied by the height of the buffer the update is stored in, with the root buffer being at the largest height. 
One unit of released potential can cover $\Oh{1}$ \IOs.
When adding an update to the tree, the root buffer is always stored in internal memory, and therefore no \IOs are needed to access it. 
However, the potential is increased by at most $1/F\cdot H$, and the operation therefore uses amortized $\Oh{H / F}$ \IOs.

\subparagraph*{Buffer Overflows.}

Each buffer at an internal node of $T$ contains at most $2\Delta F$ updates. If a buffer contains more than $2\Delta F$ updates, then a \emph{buffer overflow} is performed. 
Since each node of $T$ has at most $2 \Delta - 1$ children,
at least $F$ updates from the buffer must be to the same child.
These updates can be moved together to the buffer of that child.

A buffer overflow can happen in two cases. Either when an update is placed into the root buffer as the result of an update operation, or when updates are placed into a buffer because the parent buffer is overflowing.
Moving exactly $F$ updates out of a buffer, is always sufficient to make an overflowing buffer non-overflowing again.
A buffer overflow therefore only moves down a single path of $T$.

An overflowing buffer can be stored in $\Oh{1}$ blocks, since $2\Delta F + F=\Oh{B}$, and therefore the $F$ updates to remove can be found in $\Oh{1}$ \IOs.
If the overflowing updates are moved to a child buffer, these can be inserted via a merge in $\Oh{1}$ \IOs.
As $F$ updates are moved one layer down the tree, then the potential decreases by $1$, which is enough to cover the $\Oh{1}$ \IO cost of the overflow operation.

If the child is not an internal node of $T$, but an open rectangle,
then merging the $F$ overflowing updates into the list of updates in the rectangle takes $\Oh{R / B} = \Oh{H}$ \IOs.
As buffer overflows only move down a single path of the tree, then $\OhMega{H}$ overflows must have occurred before the overflow reaches the open rectangle. 
Merging the overflow into the list of updates in the open rectangle therefore does not increase the asymptotic amortized number of \IOs performed.

\subparagraph*{Actualizing.}

When \emph{actualizing} an open rectangle, all buffered updates to that open rectangle must be moved into the rectangle.
Note that all relevant updates are in the buffers on the root-to-leaf path in $T$ to the open rectangle.
Each of these buffers contains at most $2\Delta F=\Oh{B}$ buffered updates.
The total number of buffered updates on the path is at most $H \cdot 2 \Delta \cdot F = R$. 
For each node on the path from the root down to the rectangle the following is done.
Let $U$ be the updates on the path from the layers above in sorted order. Initially $U$ is empty.
To extend $U$ for each layer top-down, $U$ is merged with the relevant updates of the next buffer. This requires $\Oh{1 + |U| / B}$ \IOs, by scanning $U$ and the buffer. 
Since the $U$ updates are moved one layer down, they release potential $|U|/F \geq |U|/B$, 
that can cover $\OhTheta{|U|/B}$ \IOs, i.e., the amortized cost for actualizing one level of the tree is $\Oh{1}$ \IOs.
As there are $\Oh{H}$ layers of the tree, the at most $R$ relevant updates can be found in sorted order in amortized $\Oh{H}$ \IOs.
They can then be merged with the updates in the open rectangle in $\Oh{R / B}$ \IOs.
In total, the actualize operation requires amortized $\Oh{H + R / B} = \Oh{H}$ \IOs.

\subparagraph*{Finalizing.}

Each open rectangle is allowed to receive between $R$ and $2R$ updates before it is \emph{finalized}, converting it into a closed rectangle. 
When finalizing an open rectangle, all buffered updates to the rectangle are removed from $T$ and merged with the rectangle, to ensure that all buffered updates in $T$ are only to open rectangles.
We will now argue that open rectangles receive at most $2R$ updates in total by finalizing the rectangle as soon as $R$ updates have been added to it.
A finalize operation can be triggered from an actualize operation or from a buffer overflow.
Note that in both cases, the number of updates in the rectangle before the operation is at most $R - 1$.

An actualize operation may add at most $R$ buffered updates to a rectangle, i.e., at most $2R - 1$ total updates are placed in a finalized rectangle.
If the rectangle receives an update from a buffer overflow, then the overflow must have been triggered by an update in the root buffer. Buffered updates to add to the rectangle can only be the $R$ updates in buffers on the path, and the new update, which in total is at most $(R - 1) + R + 1 = 2R$ updates to add to the open rectangle.
Thus, by finalizing a rectangle as soon as it receives at least $R$ updates, it will contain between $R$ and $2R$ updates.

\subparagraph*{Spanning Requirement.}

We require that the first version of a rectangle contains $[4R, 8R[$ values. When finalizing the rectangle, $[R, 2R]$ updates have been performed and therefore at least $2R$ of the initial values are still present, that is, span all versions of the rectangle.
This ensures that the $\OhMega{R}$ spanning values requirement is met.

When finalizing a rectangle, it holds that $[2R, 10R[$ values are contained in the rectangle at version $\versionCurrent$.
New open rectangles must be created to span the value range of the closed rectangle, where the values present at version $\versionCurrent$ are contained.
If the count is in $[4R, 8R[$, then a single rectangle suffices.
If $[8R, 10R[$ values are present, then the range is \emph{split} in two rectangles at the median value, both with $[4R, 5R[$ values, and $T$ must be updated as described below.
Otherwise, version $\versionCurrent$ of the rectangle contains $[2R, 4R[$ values.
A sibling rectangle is finalized, to allow for a \emph{merge} of the rectangles to occur. 
Note that the early finalizing of the sibling preserves the $\OhMega{R}$ spanning values requirement of the sibling.
The combined present values is then $[4R, 14R[$. A split may need to be performed, i.e., the result is one or two new open rectangles.

\subparagraph*{Updating the $B^{\epsilon}$-Tree $T$.}

After finalizing open rectangles, the \BepsTree must be updated accordingly. 
If an open rectangle was split, then a new child is added to the parent node in~$T$ of the updated rectangle.
If this increases the degree of the node to $2 \Delta$, it is split by distributing its children into two new nodes, each with degree~$\Delta$. 
Its buffer is also partitioned so that each buffered update is placed in the buffer containing its relevant child. 
Splitting the node further introduces a new child in the parent of the node. 
Note that this may cascade up the tree, but only on the path towards the root.
If a merge of the rectangle occurred, then a child is deleted from the parent. 
When merging rectangles, the merged rectangles must be siblings in the tree.
The rectangle is merged with the left or right neighbor rectangle, 
which has a closest lowest common ancestor with the rectangle in~$T$, 
to ensures that the value range of existing nodes only increase.
This may cause the degree of nodes to be below~$\Delta$.
Notably, we do not merge internal nodes of~$T$, as this could create a large buffer that requires multiple flushes in different directions, known as \emph{flushing cascades}~\cite{BenderDFJK20}. 
We instead allow nodes to have a degree down to $1$, where deleting the last child results in deleting the path of consecutive degree-one from the child towards the root. 
As we show in Section~\ref{sec:constant_computation}, this does not affect the asymptotic height of the tree.

When finalizing a rectangle,
only the path from the finalized rectangle to the root may be affected. We therefore create the new tree~$T$ via \emph{path copying}, which preserves the old~$T$. 
The buffers of the copied nodes are moved, such that all buffers are present in the tree for the current version.
We maintain a list indexed by version identifier, that for any version~$v$ stores a pointer to the root of~$T$ for version $v$, i.e., the required tree~$P_v$. 

Updating the rectangles and the \BepsTree $T$ upon finalizing therefore requires scanning $\Oh{R}$ values and traversing a constant number of paths of length $\Oh{H}$ in~$T$, which takes $\Oh{R / B + H} = \Oh{R/B}$ \IOs.
As $\OhMega{R}$ updates must be applied to a rectangle before
finalizing it, this does not increase the asymptotic amortized cost of an update operation. Queries may also finalize rectangles, but already require amortized $\Oh{H}$ \IOs, causing no asymptotic query overhead. 

\subparagraph*{Space Usage.}

When finalizing a rectangle, $\OhMega{R}$ updates must have occurred in that rectangle. New rectangles are then created, which in total copies $\Oh{R}$ updates, and one path of the tree is copied. As the height of the tree is at most $H=\Oh{R / B}$, and the updates of the rectangles are stored in lists, the newly allocated space is $\Oh{R / B}$, which can be amortized over the $\OhMega{R}$ updates required for the finalization to happen.
In addition to the updates, initially $\Nbar$ values are stored across $\Oh{\Nbar / R}$ rectangles in lists, and a balanced initial \BepsTree is built on these initial rectangles, causing an initial space of $\Oh{\Nbar / B}$ blocks. 
As the structure allows for at most $c \cdot \Nbar$ updates, the space usage is therefore in total $\Oh{ \Nbar / B}$ blocks.

\subsection{Bounding the Tree Height}
\label{sec:constant_computation}

In this section we show that $H$ is an upper bound on the height of the \BepsTree $T$.

We define the \emph{weight}~$w_i$ of a node at height~$i$ in~$T$ to be the number of updates on values in the value range of the node.
The rectangles are at height $0$ of the tree, with the nodes of the tree starting at height $1$.
The updates are both the $\Nbar$ initial values as well as the up to $c \cdot \Nbar$ additional updates. 
It holds that the weight of a node is the sum of the weights of its children.
By induction on the number of updates we show that $w_i \ge B \Delta^i$ for all nodes at all heights, except for the root.
First note that the inequality holds for $i = 0$, as any rectangle contains at least $R \ge B$ updates when it was created.
Initially, $\Nbar$ updates are distributed into at most $\Nbar / R$ rectangles, where the number of updates in each rectangle is at least $R\geq B$.
Each internal node initially has degree~$[\Delta, 2 \Delta[$, except for the root that has degree~$[2, 2\Delta[$. 
By induction on the tree height~$i$, it holds that the initial tree satisfies~$w_i \ge B \Delta^i$, except for the root.
Each update affects some path of the tree. If the tree is not updated, then the weights of the nodes on the path can only grow, and therefore the inequality holds.
If rectangles are merged, then one rectangle disappears together with all the ancestors having only this single rectangle as a leaf. The other rectangle and its ancestors up to the least common ancestor of the two merged leaves get their value ranges expanded. It follows that the surviving nodes of a merge only can have their value range increase, and therefore the inequality holds.
If a split occurs in any node at height $i$, then the degree of the node before the split is~$2 \Delta$.
The node is split in two nodes at height $i$, each with $\Delta$ children. The weight of each of the two nodes is therefore at least $\Delta \cdot w_{i - 1} \ge \Delta \cdot B \Delta^{i - 1} = B \Delta^i$.
It therefore holds that~$w_i \ge B \Delta^i$.

Since the number of updates is at most $(1 + c) \cdot \Nbar$, we have $B \Delta^i \leq (1 + c) \cdot \Nbar$ for all nodes at height~$i$, except for the root. 
Since by definition $2 \leq \Delta \leq B$ and $c < 1$, we have $\Delta^{i+1} \leq 2\Nbar$, i.e., $i \le \log_{\Delta} \left( 2\Nbar \right) - 1 \leq \log_{\Delta} \Nbar$.
The height of $T$ is then at most the largest value of $i$ satisfying this inequality, plus one for the root, i.e., the height of~$T$ is at most $1+ \log_{\Delta} \Nbar \le 1 +\lrCeil{\log_{\Delta} \Nbar \, } = H$.

\subsection{Global Rebuilding}
\label{sec:global_rebuild}

The data structure above allows for an initial set of $\Nbar$ values to receive up to $c \cdot \Nbar$ persistent updates, for a constant $0 < c < 1$.
For any version~$v$, we have
$(1-c)\cdot\Nbar \leq N_v \leq (1+c)\cdot\Nbar$, i.e.,  $\Nv = \OhTheta{\Nbar}$.
Therefore, the asymptotic costs of all operations also hold with $\Nbar$ replaced by $\Nv$.

To allow for more than $c \cdot \Nbar$ updates, we create multiple copies of the data structure above using global rebuilding~\cite{Overmars83, OvermarsLeeuwen81}.
Whenever the current data structure reaches $c \cdot \Nbar$ updates, a new data structure is created with initial set~$\versionSet{\versionCurrent}$ and $\Nbar_{\mathrm{new}}=|\versionSet{\versionCurrent}|$ (and new $H$ and $R$ parameters),
with a new set of rectangles and a new \BepsTree $T$, where all buffers are empty.
We compute $\versionSet{\versionCurrent}$ by performing $\Range(\versionCurrent, -\infty, \infty)$ in 
amortized $\Oh{\frac{1}{\epsilon} \logB \Nbar + K / B}=\Oh{\Nbar/B}$ \IOs.
The new data structure can be build using $\Oh{\Nbar / B}$ \IOs by a single scan of the sorted list containing $\versionSet{\versionCurrent}$.

In the old data structure $c \cdot \Nbar$ updates have been performed before this rebuild is performed.
By amortizing the rebuild cost over these updates, the amortized cost of each update is increased by $\Oh{1/B}$ \IOs, i.e., the asymptotic amortized cost of an update is not increased.
As the space usage of the new data structure is $\Oh{\Nbar / B}$ blocks, a similar argument can be used to amortize the space usage over the updates, maintaining a linear space usage in the total number of updates.
This concludes the proof of Theorem~\ref{thm:main}.

\section{Worst-Case Bounds}
\label{sec:worst-case}

In this section, we describe how to achieve worst-case \IO guarantees instead of amortized under progressively weaker assumptions on the internal memory size $M$. 
Previous approaches to improving the amortized performance of ephemeral \BepsTrees, both in the randomized~\cite{BenderDFJK20} and worst-case~\cite{DasIaconoNekrich22} setting, assumed at least that $M = \OhMega{\frac{1}{\epsilon} B \logB \Nbar } = \OhMega{H B}$, which allows all buffers on a path to be sorted in internal memory, i.e., $\Oh{\sort{H B}} = \Oh{H}$ \IOs. 
First, in Section~\ref{sec:WC_large_memory}, we show that if $M = \OhMega{H B}$, our persistent structure can be deamortized without asymptotic overhead. 
Then, in Section~\ref{sec:WC_smaller_buffers_on_paths}, we relax the assumption to $M = \OhMega{B^{1-\epsilon}\log_2 \Nbar }$ using the \emph{subtracting game} studied by Dietz and Raman~\cite{DietzRaman93}. 
This represents a factor ${\Beps}/{\log_2 B}$ improvement on the assumption for the size of the internal memory. 
Finally, in Section~\ref{sec:WC_improving_range_queries}, we show worst-case results when only assuming $M \geq 2B$, which introduces a small additive overhead on all operations. 
We employ \emph{the zeroing} game by Dietz and Sleator~\cite[Theorem~5]{DietzSleator87} to avoid a multiplicative overhead for \Range queries. 

\subsection{Large Internal Memory Assumption}
\label{sec:WC_large_memory}

We first consider the case when $M = \OhMega{HB}$. 
When actualizing a rectangle (see Section~\ref{sec:query_update}), the buffers at the $\Oh{H}$ nodes along the root-to-leaf path, with a total size of $\Oh{H B}$, are merged to produce a sorted list of updates to apply to the rectangle. 
As shown in Section~\ref{sec:updating_buffer_rectangles}, this can be done in amortized $\Oh{H}$ \IOs, by merging the buffers top-down. 
In the worst case, this requires $\Oh{H^2}$ \IOs. 
By instead merging the buffers using an external memory sorting algorithm, the worst-case number of \IOs can be improved to $\Oh{\sort{H B}}$. 
Previous approaches to improving the amortized performance of \BepsTrees in the randomized~\cite{BenderDFJK20} and worst-case~\cite{DasIaconoNekrich22} settings both assumed at least that $M = \OhMega{H B}$, in which case the sorting term trivially disappears by performing the sorting internally after reading the $H$ buffers into internal memory. 
The remaining challenge was handling flushing cascades, which occur when merging internal nodes of $T$ results in large buffers requiring many flushes in different directions. 
For our structure, we avoid this issue by never merging internal nodes, and instead maintain the height of $T$ using global rebuilding. 
For the remainder of this section, we assume a large internal memory of size $M = \OhMega{H B}$ and describe how to achieve worst-case guarantees by incrementally performing amortized work. 

\subparagraph*{Queries.}

Finding and actualizing a relevant rectangle for a query takes $\Oh{H + \sort{H B}} = \Oh{H}$ \IOs when $M = \OhMega{H B}$. 
The worst case for a \Search and \Range query is therefore $\Oh{H}$ and $\Oh{\lrParents{1 + K / R} H} = \Oh{H + K / B}$ \IOs, respectively. 
Note that for a \Range query, for each rectangle that intersects the query, except for the leftmost and rightmost ones, $\OhMega{R}$ values are reported due to the $\OhMega{R}$ spanning values in each rectangle. 

\subparagraph*{Updates.}

When performing an update, it may be the $\lrCeil{c \cdot \Nbar \,}$'th update, which triggers a global rebuild of the structure based on a new $\Nbar$, which uses $\Oh{\Nbar / B}$ \IOs, as described in Section~\ref{sec:global_rebuild}. 
However, by performing the global rebuilding incrementally~\cite{Overmars83, OvermarsLeeuwen81} over the next $\OhTheta{\Nbar}$ updates, this does not increase the asymptotic worst-case number of \IOs of each update. 
While initializing the new structure there are still updates happening which must then be applied before it can take over.
By performing updates to the new structure at a sufficiently fast rate compared to the live structure this ensures that they stay within a constant factor of each other in size until the new structure takes over. 
Therefore, only the \IO cost of an update without global rebuilding needs to be considered.

Updates are inserted in the root buffer of the \BepsTree, as described in Section~\ref{sec:updating_buffer_rectangles}. 
By keeping the root buffer in internal memory this uses no \IOs. 
If the root buffer overflows, it may cause buffer overflows along a root-to-leaf path down to an open rectangle, which may then be finalized by performing an actualize operation followed by a path copy.
The update therefore requires $\Oh{H}$~\IOs in total under the large internal memory assumption.
However, each time the root buffer overflows, $F$ updates are removed from it, meaning this occurs at most every $F$th update.
Thus, when the root buffer overflows, we incrementally apply the update to the structure over the next $F$ updates, ensuring that $\Oh{H/ F}$ \IOs are performed per update in the worst case. 
To not interfere with the incremental work, we place new updates in a separate buffer while it is in progress and merge them with the root buffer when it is finished. 
If a path copy has occurred, the root pointers of the $F$ most recent versions must be updated to the new root. 
Since they are stored together in an array indexed by their version identifier this takes $\Oh{1}$ \IOs.
If a query occurs while an update is being performed incrementally, we complete the update before executing the query. 
This does not increase the asymptotic worst-case number of \IOs for queries. 

\subsection{Smaller Buffers on All Paths Using the Subtraction Game} \label{sec:WC_smaller_buffers_on_paths}

In the previous section, we described how as long as the internal memory can hold all values on a root-to-leaf path towards the same open rectangle, there is no overhead on worst-case queries and updates compared to the amortized bounds. 
To lower the possible number of such values, we will slightly change the flushing strategy described in Section~\ref{sec:updating_buffer_rectangles} where we only performed a flush when a buffer overflowed. 
Instead, for every $F$'th update, we flush along an entire root-to-leaf path, always flushing towards the child where most of the updates are going. 
We still flush at most $F$ values, which preserves the property that internal nodes of $T$ contain at most $2 \Delta F$ updates. 
In the following, we show that this flushing strategy guarantees that all buffers contain $\Oh{F \log_2 \Delta}$ updates going towards the same child, and therefore also the same leaf. 
This implies that the assumption $M = \OhMega{H F \log_2 \Delta} = \OhMega{B^{1-\epsilon} \log_2 \Nbar}$ is sufficient to achieve no overhead for worst-case queries and updates. 
This is a factor $\OhTheta{B^\epsilon / \log_2 B}$ improvement over the previous smallest assumption on $M$~\cite{DasIaconoNekrich22}.

We can view each node as playing the \emph{subtracting game} studied by Dietz and Raman~\cite{DietzRaman93} for the number of updates $x_1, x_2, x_3, ..., x_{2\Delta}$ going towards each of its at most $2\Delta$ children. 
In particular, when at most $F$ updates are flushed towards a node, if there are $\delta_i$ new updates going towards the $i$th child, then variable $x_i$ is increased by $\delta_i$. 
Then we flush towards the child $j$ where most of the updates are going which sets the variable $x_j = \max\!\lrCurlyBrackets{x_j - F, 0}$. 
Following~\cite[Theorem 3]{DietzRaman93}~and~Theorem~\ref{thm:subtraction_game_refined} in the Appendix, scaled by a factor of $F$, this guarantees $x_i = \Oh{F \log_2 \Delta}$ for any $i$. 

We also need to consider how merging and splitting in $T$ impacts the games. 
Only leaves of $T$, corresponding to open rectangles, are merged. 
When an internal node is split, it corresponds to evenly distributing the $x_i$ variables from one game to two new games, except for one variable that is split into two new variables, each with a smaller or equal value. 
When a leaf, i.e. an open rectangle, is merged or split, the one or two rectangles involved are first actualized, which sets their variables to zero, a stronger operation than subtracting. 
Thus, the variable for a new rectangle is always zero and variables on root-to-leaf paths to actualized rectangles may be decremented. 
In all cases and for all games, variables are either decremented without adding to the game, or a copy of an existing game is created, where all variables in the copy are equal or smaller in value than before. 
This concludes the proof of Theorem~\ref{thm:worst-case-large-cache}. 

\subsection{Improving Worst-Case \Range Queries Using the Zeroing Game} \label{sec:WC_improving_range_queries}

In this section, we consider the small-memory setting with $M \geq 2B$, to overcome the theoretical limitation of the memory assumptions made in Sections~\ref{sec:WC_large_memory}~and~\ref{sec:WC_smaller_buffers_on_paths}. 
Actualizing a rectangle by merging the relevant updates on a root-to-leaf path to a rectangle requires $\Oh{H + \gamma}$ total \IOs, where $\gamma = \sort{B^{1 - \epsilon} \log_2 \Nbar}$.
The construction from the previous section directly results in worst-case \Search queries and updates in $\Oh{H + \gamma}$ and $\Oh{\frac{1}{F} \lrParents{H + \gamma}}$ \IOs, respectively. 
However, since \Range queries are performed by repeatedly searching for the $\OhTheta{1 + K / R}$ rectangles intersecting the query, the worst-case number of \IOs is $\OhTheta{\lrParents{1 + K / R} \lrParents{H + \gamma}} = \OhTheta{H + \gamma + \frac{K}{B} \lrParents{1 + \frac{\epsilon \log_2 B}{B^\epsilon}\log_{M/B}\lrParents{B^{- \epsilon}\log_2\Nbar}}}$, notably with a multiplicative non-constant overhead on the reporting term. 
In this section, we describe how to guarantee \Range queries in worst-case $\OhTheta{H + \gamma + \frac{K}{B}}$ \IOs when $M \geq 2B$. 

The worst-case \IO cost of a \Range query can be improved by merging all the buffered updates to the open rectangles intersecting the query in a top-down, layer-by-layer fashion.
That is, by essentially actualizing all the open rectangles intersected by the \Range query simultaneously. 
We denote the updates already present in the rectangles by the \emph{partial output list}. 
Rather than applying the buffered updates to the rectangles, we merge them with the partial output list to obtain the final output. 
A given query $\Range(v, x, y)$ reports $\OhMega{R}$ values from each intermediate rectangles, i.e., all rectangles intersecting the query except for the two that contain the endpoints $x$ and $y$. 
Thus, in $\Oh{(1 + K / R) H + (R + K) / B} = \Oh{H + K / B}$ \IOs we can find all the relevant rectangles and compute the partial output list. 
To collect the buffered updates in~$T$ for the intermediate open rectangles in sorted order, we merge the updates down layer by layer. 
We only move down the updates to versions earlier or equal to $v$ since only these can affect the query result. 
Once obtained, these updates are merged with the partial output list using linear \IOs to report the output of the \Range query. 

The buffered updates are stored in $T$, which contains the open rectangles at version~$\versionCurrent$, however, the \Range query is on the rectangles present at version~$v$. 
Let $T_v$ denote the \BepsTree on open rectangles at version~$v$, i.e., the state of $T$ when version~$v$ was created. 
From $T_v$ to $T$ the tree may have changed, but no later updates are relevant for the query. 
Thus, the total number of relevant updates does not increase from $T_v$ to $T$, and each update remains on the root-to-leaf path towards the open rectangle to which the update is relevant.
The relevant updates in $T$ can be collected in sorted lists ordered by layer by traversing each root-to-leaf path in $T$ towards open rectangles intersecting the query using $\Oh{(1 + K / R) H}$ \IOs. 
To bound the \IOs to move the updates down layer by layer, we show that the total number of updates is $\Oh{B^{1 - \epsilon} \log_2\Nbar + K / H}$. 
To this end, we need the additional invariant that all nodes of degree one have empty buffers, which we show how to obtain below.
Let $T_v'$ be the subtree of $T_v$ consisting of all nodes on root-to-leaf paths to rectangles intersecting the query. 
Then split $T_v'$ into two root-to-leaf paths $p_x$ and $p_y$ to $x$ and $y$, respectively, along with all the subtrees hanging off $p_x$ or $p_y$. 
For a node on $p_x$ (symmetrically $p_y$) of degree at least two there may be one or more subtrees $T_{sub}$ hanging off the node. 
Since only the nodes of $T_{sub}$ with degree at least two have non-empty buffers, if $T_{sub}$ has $\ell$ leaves, the number of buffered updates in $T_{sub}$ is at most $\Oh{(\ell - 1) B}$.
Thus, the number of buffered updates in $T_v'$ excluding degree one nodes on $p_x$ and $p_y$ is $\Oh{K / H}$. 
A degree one node on $p_x$ and $p_y$ may have a large degree in $T_v$, but since it only has one child in the direction of the query, due to the subtracting game, it stores at most $\Oh{F \log_2 \Delta}$ relevant updates. 
The number of degree one nodes on $p_x$ and $p_y$ is at most $2H$, and they together contribute $\Oh{B^{1 - \epsilon} \log_2 \Nbar}$ buffered updates, which we locate and sort separately using $\Oh{H + \gamma}$ \IOs.
For the remaining $\Oh{K / H}$ buffered updates, we merge them layer by layer using $\Oh{H + K/B}$ \IOs. 
In total, the worst-case number of \IOs to perform a \Range query is $\Oh{H + \gamma + K/B}$ \IOs. 

\subparagraph*{Empty Buffers for Degree one Nodes. }
To ensure that each node of degree one has an empty buffer, we alter the buffer capacity of nodes to scale with the degree.
Let the capacity of the buffer of a node with degree $d \ge 2$ be at most $F \cdot \min\{ 2 d, 2 \Delta \}$, with nodes of degree one having a buffer capacity of $0$.
When flushing a node, as the maximum number of updates in the buffer scales with the degree, then at least $F$ updates going to the same child can be found when overflowing. 
When splitting a node, it must have degree $2\Delta$, resulting in the two new nodes having degree~$\Delta$, which therefore does not decrease the buffer capacity, and flushing is not needed.
When a child of a node is removed due to merging rectangles, the degree of the node is decreased by one. 
If the degree remains at least two, at most two flushes are required to get the buffer capacity within bounds. 
Otherwise, if the degree drops from two to one, at most four flushes are needed.
To avoid cascading merges of rectangles, we do not finalize a rectangle once it receives a certain number of updates.
Instead, we finalize the rectangle that has received the most updates, provided it has received at least $R$ updates. 
This last condition ensures a bound on the space usage.
Including the initial flush from the root buffer, then when a rectangle is finalized, at most $5F$ updates have been flushed into open rectangles.

Let $U_i$ denote the number of updates to the $i$th rectangle, excluding the initial insertions.
We extend the data structure to include an array over all open rectangles, where index $j$ stores a blocked-linked-list of all rectangles where the number of updates is $U_i = j$.
Each rectangle has a double linked pointer between its location in the array of lists and the rectangle. This allows for moving a rectangle to a new entry in the array, when it receives updates, as well as finding a rectangle which have received the most updates, by scanning the list. 

To show that $U_i$ is bounded by $\Oh{R}$, we apply the \emph{zeroing game} of Dietz and Sleator~\cite{DietzSleator87}, using the variables
$x_i = \max \left\{0, \frac{U_i - R}{5F} \right\}$ if rectangle $i$ is open and $x_i = 0$ if it is closed.
For open rectangles, $x_i$ count the number of units of $5F$ updates received beyond the first $R$ updates.
This ensures that the variables are incremented by at most $1$ in total for each round, when at most $5$ flushes of size at most $F$ are flushed into the open rectangles.
When finalizing a rectangle it becomes closed, which ensures that $x_i = 0$, matching the zeroing step.
We bound the total number of rectangles by $\Nbar$ and therefore also the number of variables.
Following \cite[Theorem~5]{DietzSleator87} and a proof similar to Theorem~\ref{thm:subtraction_game_refined} in the Appendix, we have that for any $i$ then $x_i \le \log_2 \Nbar + 1$ at any time. 
Consequently, it follows that $U_i \le 5F (\log_2 \Nbar + 1) + R$. 
It can be shown that $5 F (\log_2 \Nbar + 1) \le 2 R$, when $\Nbar \ge 8$, by unfolding $R$ and simplifying the inequality to show that $\frac{\epsilon \log_2 B}{B^\epsilon} \lrParents{1 + \frac{1}{\log_2 \Nbar}} \le \frac{4}{5}$.
It therefore holds that each rectangle receives at most $U_i \le 3 R$ updates, due to the zeroing game.
Thus, when finalizing a rectangle, at least $R$ updates have been performed. 
Including the updates from the buffers on the path towards the rectangle, the total number of updates applied is between $R$ and $4R$.
By ensuring that each rectangle contains $[8R, 16R[$ initial values, the rebalancing operations are possible, and the spanning criteria remains satisfied.

An update therefore performs at most $5$ flushes using $\Oh{H}$ \IOs, along with locating and finalizing a single rectangle in respectively $\Oh{R / B} = \Oh{H}$ and $\Oh{H + \gamma}$ \IOs.
Performing this operation incrementally allows for updates to spend worst-case $\Oh{\frac{1}{F} \lrParents{H + \gamma}}$ \IOs.
If a query happens while an incremental update is being performed, the incremental update is completed, using at most $\Oh{H + \gamma}$ \IOs, which does not increase the total cost of the query.
When a \Search query happens, similar to the new \Range query, we do not apply the relevant updates on the path to the open rectangle to avoid queries interfering with the zeroing game.
This concludes the proof of Theorem~\ref{thm:worst-case-small-cache}. 

\section{Discussion and Open Problems}

Global rebuilding, as described in Section~\ref{sec:global_rebuild}, allows for constructing a partially-persistent set of any sorted set in a linear number of \IOs,  without creating the set anew by a sequence of insertions.
Symmetrically, it is possible to \emph{purge} all versions of the set older than some threshold, without performing all updates anew. This problem was first motivated by Becker, Gschwind, Ohler, Seeger, and Widmayer~\cite{BeckerGOSW96}.
As our data structure consists of multiple independent data structures covering disjoint version intervals, then all data structures which only cover versions to be purged can be removed efficiently by a linear number of \IOs.
For both use cases, the space usage is asymptotically linear in the size of the oldest stored set and the number of updates performed.

Further, global rebuilding allows for a crude fully persistent data structure, which supports efficient buffered updates and queries, but where cloning past versions requires a linear number of \IOs.
The fully persistent data structure by Brodal, Rysgaard, and Svenning~\cite{BrodalRysgaardSvenning23} allows cloning past versions in worst case $\Oh{1}$ \IOs. They do, however, not buffer updates, which therefore are amortized and a factor $\Oh{1 / B^{1- \epsilon}}$ slower than our data structure.
Our data structure is therefore better when there are many updates, but few clones of past versions happening.
Further, our data structure is simpler. 
It remains an open problem to design buffered fully-persistent search-trees, which remains efficient for clone operations.

In Section~\ref{sec:worst-case} we showed how to achieve worst-case bounds matching those of ephemeral \BepsTrees, when $M = \OhMega{B^{1-\epsilon} \log_2 N}$. This is an improvement by a factor $\OhTheta{B^\epsilon / \log_2 B}$ on the required lower bound on $M$ over the worst-case results of \BepsTrees by Das, Iacono, and Nekrich~\cite{DasIaconoNekrich22}.
It remains an open problem to show a worst-case \IO lower bound dependency on $M$ or to find a structure with worst-case \IO guarantees matching the amortized \IO bounds for $M=2B$.

\bibliographystyle{plainurl}
\bibliography{references}

\appendix

\section{Subtraction Game}

Dietz and Raman~\cite{DietzRaman93} considered the following \emph{subtraction game}. The game is played on $n$ non-negative real variables $x_1,\ldots,x_n$, 
initially all zero. The game progresses in \emph{rounds}. Each round consists of 
an \emph{increment step} followed by a \emph{subtraction step}.
In the increment step an adversary selects $n$ non-negative real values $\delta_1,\ldots,\delta_n$, where $\sum_{i=1}^n \delta_i \leq 1$, and sets $x_i \leftarrow x_i+ \delta_i$.
In the subtraction step a largest $x_i$ is decremented by setting $x_i \leftarrow \max\{0, x_i - 1\}$.
Dietz and Raman {\cite[Theorem 3]{DietzRaman93}} proved an upper bound on all variables of $1+\ln n$.
Below we improve this bound to be $H_{n-1}\leq 1+\ln (n-1)$. Note that $H_{n-1} \leq \log_2 n$ for $n \geq 1$.
The proof extends to the \emph{zeroing game} of Dietz and Sleator~\cite[Theorem~5]{DietzSleator87}, where the subtraction step is replaced by a \emph{zeroing step}, in which a largest $x_i$ is set to $0$.
For both games, it holds that for all $i$, $x_i < H_{n-1} + 1$ after each step.

\begin{theorem}
\label{thm:subtraction_game_refined}
    The subtraction game on $n \geq 2$ variables guarantees all $x_i < H_{n-1} \leq 1 + \ln (n-1)$ after each round.
\end{theorem}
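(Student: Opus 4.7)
The plan is to extend the potential-function argument of Dietz and Raman by exploiting that the smallest variable $y_n$ is always non-negative. Throughout, let $y_1 \geq y_2 \geq \cdots \geq y_n \geq 0$ denote the variables listed in sorted (decreasing) order and define the sorted potential
\[
\Phi \;=\; \sum_{i=1}^{n} \frac{y_i}{i}.
\]
Since each $y_i \geq 0$, we have $y_1 \leq \Phi$, so it suffices to establish the invariant $\Phi < H_{n-1}$ after every round; initially $\Phi = 0$, and the base case $n = 2$ (giving $\Phi < 1$) can be verified directly.

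For the inductive step I would analyze the two sub-steps of a round. In the increment step the adversary adds $\delta_1, \ldots, \delta_n \geq 0$ with $\sum_i \delta_i \leq 1$. Applying the rearrangement inequality to the decreasing weights $1 > 1/2 > \cdots > 1/n$ shows that $\Phi$ increases by at most $\sum_i \delta_{\pi(i)}/i \leq \sum_i \delta_i \leq 1$. For the subtraction step, let $k$ denote the rank at which the decremented value $\max\{y_1 - 1,\,0\}$ is reinserted in the new sorted order. A direct case analysis (treating $k \leq n-1$, $k = n$, and the clipped case $y_1 < 1$) produces the explicit identity
\[
\Delta \Phi_{\text{sub}} \;=\; -\,\frac{1}{k} \;-\; \sum_{i=2}^{k} \frac{y_1 - y_i}{i(i-1)},
\]
so $\Phi$ always decreases by at least $1/k$, plus a non-negative bonus proportional to how far $y_2, \ldots, y_k$ fall below $y_1$.

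The main obstacle is the worst case $k = n$, where the guaranteed decrease of $1/n$ alone is not enough to cancel a unit increment. The key refinement is that $k = n$ can only occur when $y_n \geq y_1 - 1$, which forces \emph{every} variable into the interval $[y_1 - 1,\,y_1]$ and therefore $\Phi \geq y_n\,H_n \geq (y_1 - 1) H_n$. I would use this coupling between the state shape and $\Phi$ to rule out a net increase whenever $\Phi \geq H_{n-1}$: in the tight $k = n$ regime the adversary's usable increment on the max is at most $1/n$ (distributing less is forced by the near-equality), so the round-level change of $\Phi$ can be charged to the bonus term and the $-1/n$ decrease combined. In the off-tight regime $k < n$, the bonus term already makes the decrease strictly exceed $1$ as soon as $y_1$ is sufficiently close to $H_{n-1}$, which closes the induction.

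Combining the two regimes yields $\Phi < H_{n-1}$ after every round, and hence $y_1 \leq \Phi < H_{n-1} \leq 1 + \ln(n-1)$. The statement about intermediate steps then follows from adding one more increment of at most $1$. The principal technical difficulty I anticipate is carrying out the "tight $k=n$" analysis rigorously, since the $1/n$ slack is exactly the difference between the Dietz--Raman bound $H_n$ and the refined bound $H_{n-1}$, so the argument must account for this single $1/n$ term with no wasteful estimates.
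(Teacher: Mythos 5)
There is a genuine gap: the central invariant $\Phi < H_{n-1}$ is false, so the induction cannot close. Take $n=3$. The warm-up strategy (also used for the tightness example in the paper) reaches the state $\{1-\epsilon,\,1-\epsilon,\,0\}$, where $\Phi = \tfrac{3}{2}(1-\epsilon) < H_2 = \tfrac32$. Now let the adversary put $\delta = \tfrac12$ on each of the two large variables; after the increment step the values are $\{\tfrac32-\epsilon,\,\tfrac32-\epsilon,\,0\}$ and after subtracting $1$ from one of them the state is $\{\tfrac32-\epsilon,\,\tfrac12-\epsilon,\,0\}$, which has $\Phi = \tfrac74 - \tfrac32\epsilon > \tfrac32$. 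The theorem's conclusion $y_1 < \tfrac32$ still holds here, but your potential has overshot the bound you are trying to propagate: the step from ``$y_1 \le \Phi$'' to ``it suffices to prove $\Phi < H_{n-1}$'' strengthens the statement into something untrue. The same example also refutes the claim in your off-tight regime: here $k=2 < n=3$, the bonus term $\sum_{i=2}^{k}\frac{y_1-y_i}{i(i-1)}$ is exactly zero (the top two values are equal after the increment), the subtraction decreases $\Phi$ by only $\tfrac12$ while the increment raised it by $\tfrac34$, and $y_1$ is arbitrarily close to $H_{n-1}$. So neither regime of your case analysis controls the net change of $\Phi$ near the boundary. (Your explicit identity for $\Delta\Phi_{\mathrm{sub}}$ is correct; the failure is that the increment step can add up to $1$ to $\Phi$ while the subtraction step is only guaranteed to remove $1/k$ plus a bonus that can vanish.)

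The paper avoids a single scalar potential altogether. It maintains, simultaneously for every $1 \le k \le n$, a strict upper bound $s_k$ on the sum of the $k$ largest variables, with $s_n = n-1$ and $s_k = \frac{k}{k+1}(1+s_{k+1})$, which unwinds to $s_1 = H_{n-1}$. The case analysis is then on where the subtracted variable ends up: if it is still among the $k$ largest after the round, the top-$k$ sum cannot have increased (the increment adds at most $1$ and the subtraction removes $1$); if it is not, the new top $k$ is a $\frac{k}{k+1}$ fraction of the old top $k+1$ plus increments, which is exactly the recurrence. If you want to salvage a potential-function route, you would need a potential whose level set at the target bound actually contains all reachable states with $y_1 < H_{n-1}$ that the adversary can visit, and the sorted harmonic sum does not have this property.
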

\begin{proof}
    For $1\leq k\leq n$, let $s_k$ denote a strict upper bound on the sum of the $k$ largest variables in the game after any round. The goal is to find a small valid~$s_1$. Below we argue that 
    $s_n = n-1$ and $s_k = \frac{k}{k+1} (1+s_{k+1})$, for $1 \leq k < n$, are valid upper bounds. By induction for decreasing $k$, we have $s_k \geq k$ for $1 \leq k < n$: $s_{n-1} = \frac{n-1}{n}(1+s_n)=\frac{n-1}{n}(1+n-1)=n-1$ and $s_k=\frac{k}{k+1}(1+s_{k+1})\geq \frac{k}{k+1}(1+k+1) > k$ for $1\leq k < n-1$.
    By induction for increasing~$k$, we have
    $s_1=\sum_{j=2}^{k} \frac{1}{j} + \frac{1}{k} s_k$ for $2 \leq k \leq n$, i.e., 
    $s_1=\sum_{j=2}^{n} \frac{1}{j} + \frac{n-1}{n}=\sum_{j=1}^{n-1} \frac{1}{j}=H_{n-1}$.
    
    Initially, all $x_i$ are zero, i.e., all $s_1,\ldots,s_n$ are valid strict upper bounds. By induction on the number of rounds, we show that the upper bounds remain valid after each round.
    
    Consider $s_n$, and assume there exists a round where the total sum is $< n-1$ before the round and $\geq n-1$ after the round. Since the total sum increases, the subtraction step must have decreased the total sum by $<1$. It follows that all variables after the increment step must be $<1$, and the subtraction step sets one variable to zero, i.e., the total sum after the increment step is $< n-1$. This contradicts the assumption that the total sum is $\geq n-1$ after the round, i.e., the total sum after each round is $< n-1$.

    Next, consider $s_k$, for $1 \leq k < n$. We first consider the case where the subtracted variable $x_i < 1$ before the subtraction step, i.e., all variables are $<1$. Then, after the subtraction step the sum of the $k$ largest variables is $<k\leq s_k$. 
    Otherwise, $x_i \geq 1$ after the increment step.
    If the variable $x_i$ is among the $k$ largest variables after the round, then the sum of the $k$ largest variables can increase by at most $\sum_{j=1}^{n} \delta_j - 1\leq 0$ in the round
    (the sum of the $k$ largest variables after the round consists of the same variables as before the round, or variables with smaller value before the round), i.e., the sum of the $k$ largest variables does not increase by the round. 
    Otherwise, $x_i$ is not among the $k$ largest variables after the round, but $x_i$ is the largest after the increment step, where the $k+1$ largest variables have sum $<\sum_{j=1}^{n} \delta_j+s_{k+1}$. After subtracting~$x_i$ (i.e., after the round), the sum of the new $k$ largest variables is
    $<\frac{k}{k+1}\big(\sum_{j=1}^{n} \delta_j+s_{k+1}\big)\leq \frac{k}{k+1}(1+s_{k+1})= s_k$.
\end{proof}

That the analysis is tight follows from the following strategy, as also described in the PhD thesis of Raman~\cite[Section 2.2.3]{RamanPhD}. We first perform sufficiently many initial rounds, where after $r$ rounds one variable has value zero, say $x_1=0$, and all other $n-1$ variables have value $\epsilon_r = 1-\big(1-\frac{1}{n}\big)^r$. Note $\epsilon_0=0$ and $\epsilon_r \to 1^-   $ for $r \to \infty$. In round~$r$ we let $\delta_1=\epsilon_{r-1} + (1 - \epsilon_{r-1})/n$ and $\delta_2=\cdots=\delta_n = (1- \epsilon_{r-1})/n$. This ensures that the increment step makes all $n$ values have value $\epsilon_r=\epsilon_{r-1} + (1 - \epsilon_{r-1})/n$ before the subtraction step. By induction it follows that $\epsilon_r = 1-\big(1-\frac{1}{n}\big)^r$.
By performing a sufficient number of initial rounds, $n-1$ variables can achieve value $1-\epsilon$ arbitrary close to one. In the next $n-2$ rounds, for $j=n-1,\ldots,2$, we distribute value $1/j$ to the $j$ variables with maximum value $1-\epsilon + \sum_{i=j+1}^{n-1} \frac{1}{i}$.
The final maximum value is $1-\epsilon + \sum_{i=2}^{n-1} \frac{1}{i}=H_{n-1} - \epsilon$.

\end{document}